\documentclass[a4paper,11pt]{article}

\usepackage{microtype}
\usepackage{graphicx}

\graphicspath{{./Figures/}}

\usepackage{fullpage}

\author{Shay Golan\\ Bar Ilan University\\\texttt{golansh1@cs.biu.ac.il}
\and Tsvi Kopelowitz\\ University of Waterloo\\ \texttt{kopelot@gmail.com}
\and Ely Porat\\ Bar Ilan University\\\texttt{porately@cs.biu.ac.il}}

\usepackage{makeidx}
\usepackage{amsfonts}
\usepackage{amssymb}
\usepackage{amsthm}
\usepackage{amsmath}
\usepackage{tabularx, booktabs}
\usepackage{tabu}
\usepackage{colortbl}
\usepackage{enumitem}
\usepackage{subfigure}
\usepackage[table]{xcolor}
\usepackage[hidelinks]{hyperref}

\newtheorem{theorem}{Theorem}
\newtheorem{lemma}[theorem]{Lemma}

\newtheorem{definition}[theorem]{Definition}

\usepackage{enumitem}

\usepackage{listings}

\newtheorem{claim}{Claim}

\usepackage{marginnote}
\usepackage{clrscode3e}
\bibliographystyle{plain}

\def\polylog{\operatorname{polylog}}
\newcommand{\ceil}[1]{\left\lceil{#1}\right\rceil}
\newcommand{\floor}[1]{\left\lfloor{#1}\right\rfloor}

\newcommand{\modulo}{\operatorname{mod}}
\newcommand{\naive}{na\"{\i}ve }		
\newcommand{\si}{SI}	
\newcommand{\SI}{SI}

\newcommand{\Enqueue}{\normalfont{\texttt{Enqueue}}}
\newcommand{\Dequeue}{\normalfont{\texttt{Dequeue}}}

\newcommand{\ti}[2]{\texttt{text\_interval}({#1},{#2})}
\newcommand{\candidates}[2]{\mathcal C({#1},{#2})}
\newcommand{\candidatesUi}[2]{\mathcal C_{ap}({#1},{#2})}
\newcommand{\candidatesNonUi}[2]{\overline{\mathcal C_{ap}}({#1},{#2})}

\newcommand{\candidatesNonUix}[2]{\overline{\mathcal C_{ap}^{\textit{left}}}({#1},{#2})}
\newcommand{\candidatesNonUiy}{\overline{\mathcal C_{ap}^{\textit{right}}}(I,\alpha)}

\newcommand{\lv}{\textit{L}}
\newcommand{\rv}{\textit{R}}

\title{Streaming Pattern Matching with $d$ Wildcards\footnote{Part of this work took place while the second author was at University of Michigan. This work is supported in part by the Canada Research Chair for Algorithm Design, NSF grants CCF-1217338, CNS-1318294, and CCF-1514383, by ISF grant 1278/16, and by the BIU Center for Research in Applied
Cryptography and Cyber Security in conjunction with the Israel
National Cyber Bureau in the Prime Minister's Office .}}

\begin{document}

\maketitle

\begin{abstract}
In the pattern matching with $d$ wildcards problem one is given a text $T$ of length $n$ and a pattern $P$ of length $m$ that contains $d$ wildcard characters, each denoted by a special symbol $'?'$. A wildcard character matches any other character. The goal is to establish for each $m$-length substring of $T$ whether it matches $P$.
In the streaming model variant of the pattern matching with $d$ wildcards problem the text $T$ arrives one character at a time and the goal is to report, before the next character arrives, if the last $m$ characters match $P$ while using only $o(m)$ words of space.

In this paper we introduce two new algorithms for the $d$ wildcard pattern matching problem in the streaming model.
The first is a randomized Monte Carlo algorithm that is parameterized by a constant $0\leq \delta \leq 1$. This algorithm uses $\tilde{O}(d^{1-\delta})$ amortized time per character and $\tilde{O}(d^{1+\delta})$ words of space.
The second algorithm, which is used as a black box in the first algorithm, is a randomized Monte Carlo algorithm which uses $O(d+\log m)$ worst-case time per character and $O(d\log m)$ words of space.

\end{abstract}
\thispagestyle{empty}
\setcounter{page}{0}.
\newpage

\section{Introduction}	
We investigate the \emph{pattern matching with $d$ wildcards problem} (PMDW)  in the \emph{streaming model}.
Let $\Sigma$ be an alphabet and let $'?'\notin\Sigma$ be a special character called the \emph{wildcard character} which matches any character in $\Sigma$.
The PMDW problem is defined as follows.
Given a \emph{text} string $T=t_0t_1\dots t_{n-1}$ over $\Sigma$ and a \emph{pattern} string $P=p_0p_1\dots p_{m-1}$ over alphabet $\Sigma\cup \{?\}$ such that $P$ contains exactly $d$ wildcard characters, report all of the occurrences of $P$ in $T$. This definition of a match is one of the most well studied problems in pattern matching~\cite{FP74,MR92,Indyk98,Kalai02,CH02,CC07}.

\paragraph{The streaming model.} The advances in technology over the last decade and the massive amount of data passing through the internet has intrigued and challenged computer scientists, as the old models of computation used before this era are now less relevant or too slow. To this end, new computational models have been suggested to allow computer scientists to tackle these technological advances. One prime example of such a model is the \emph{streaming} model~\cite{AMS99,HRR99,Muthukrishnan05,KNPW11}. Pattern matching problems in the streaming model are allowed to preprocess $P$ into a data structure that uses space that is sublinear in $m$ (notice that space usage during the preprocessing phase itself is not restricted). Then, the text $T$ is given online, one character at a time, and the goal is to report, for every integer $\alpha\ge m-1$, whether $t_{\alpha-m+1}\dots t_\alpha$ matches $P$. This reporting must take place before $t_{\alpha+1}$ arrives. Throughout this paper we let $\alpha$ denote the index of the last text character that has arrived.

Following the breakthrough result of Porat and Porat~\cite{PP09}, recently there has  been a rising interest in solving pattern matching problems in the streaming model~\cite{BG14,EJS10,LLZ12,RGM13,JPS13,CFPSS15B,CFPSS16}. However, this is the first paper to directly consider the important wildcard variant.

\paragraph{Related work.}
Notice that one way for solving PMDW (not necessarily in the streaming model), is to treat $'?'$ as a regular character, and then run an algorithm that finds all occurrences of $P$ (that does not contain any wildcards) in $T$ with up to $k=d$ mismatches. This is known as the $k$-mismatch problem~\cite{LV86,PL07,ALP04,CEPR10,CEPR09,CP07,CFPSS16}.
The most recent result by Clifford et al.~\cite{CFPSS16} for the $k$-mismatch problem in the streaming model implies a solution for PMDW in the streaming model that uses $O(d^2 \polylog m)$ words\footnote{We assume the RAM model where each word has size of $O(\log n)$ bits.} of space and $O(\sqrt{d}\log d + \polylog m)$ time per character. Notice that Clifford et al.~\cite{CFPSS16} focused on solving the more general $k$-mismatch problem.

We mention that while our work is in the streaming model, in the closely related online model (see~\cite{CS10,CJPS13}), which is the same as the streaming model without the constraint of using sublinear space, 	
Clifford et al.~\cite{CEPP11} presented an algorithm, known as the \emph{black box} algorithm, which solves several pattern matching problems. When applied to PMDW, the black box  algorithm uses $O(m)$ words of space and $O(\log^2 m)$ time per arriving text character.
In the offline model the most efficient algorithms for PMDW take  $O(n\log m)$ time and were introduced by Cole and Hariharan~\cite{CH02} and by Clifford and Clifford~\cite{CC07}.

\subsection{New results}
We improve upon the work of Clifford et al.~\cite{CFPSS16}, for the special case that applies to PMDW,
by introducing the following algorithms (the $\tilde{O}$ notation hides logarithmic factors). Notice that Theorem~\ref{thm:tradeoff} improves upon the results of Clifford et al.~\cite{CFPSS16} whenever $\delta > 1/2$. We also emphasize that our proof of Theorem~\ref{thm:tradeoff} makes use of Theorem~\ref{thm:main}.

\begin{theorem} \label{thm:main}
There exists a randomized Monte Carlo algorithm for the PMDW problem in the streaming model that succeeds with probability $1-1/poly(n)$, uses $O(d\log m)$ words of space and spends $O(d+\log m)$ time per arriving text character.
\end{theorem}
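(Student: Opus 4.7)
The plan is to reduce PMDW to checking $d+1$ rolling Karp--Rabin fingerprint equalities per arriving text character. In the (space-unrestricted) preprocessing I decompose $P$ into its $d+1$ maximal non-wildcard segments $S_0,S_1,\ldots,S_d$, where $S_i$ occupies pattern positions $[s_i,e_i]$ and has length $\ell_i = e_i-s_i+1$. I compute the Karp--Rabin fingerprint $\phi(S_i)$ of each segment using a uniformly random base $r$ drawn from a sufficiently large field so that the total probability of a fingerprint collision over the length-$n$ stream is at most $1/\mathrm{poly}(n)$. The text window ending at position $\alpha$ is an occurrence of $P$ iff, for every $i$, the fingerprint of $t_{\alpha-m+1+s_i}\cdots t_{\alpha-m+1+e_i}$ equals $\phi(S_i)$; hence, with high probability, it suffices to perform $d+1$ fingerprint equality checks per arriving character.

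For each segment $i$ I maintain the rolling fingerprint $\psi_i(\alpha)$ of its currently aligned text window, updated by
\[
\psi_i(\alpha+1) \;=\; r\cdot\psi_i(\alpha) \;-\; r^{\ell_i}\, t_{\alpha-m+1+s_i} \;+\; t_{\alpha-m+2+e_i}.
\]
Assuming the entering and leaving characters for every segment are available, the $d+1$ rolling updates and the $d+1$ comparisons to the precomputed $\phi(S_i)$ together take worst-case $O(d)$ time per step, and the state of the segments themselves occupies only $O(d)$ words.

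The crux of the argument is to supply the $O(d)$ past-character lookups per step in only $O(d\log m)$ words total, since a flat buffer of the last $m$ characters is $\Theta(m)$ words. My plan is to represent the recent text by a Porat--Porat-style hierarchical structure: $O(\log m)$ chunks of geometrically increasing sizes, each summarised by a Karp--Rabin fingerprint, together with a working-set of $O(d+\log m)$ explicit characters that are pre-routed to be present exactly at the steps when the deterministic schedule of segment updates will consume them (the offsets $\{s_i,e_i\}$ are fixed in advance, so ``which character is needed when'' is known a priori). A standard de-amortisation spreads the chunk-merging and routing work across many cheap steps, giving $O(\log m)$ worst-case per-character overhead for the hierarchy and keeping the working-set size within its stated bound. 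Combining the $O(\log m)$ hierarchical-maintenance cost with the $O(d)$ segment work yields $O(d+\log m)$ worst-case time per character, and summing the chunk fingerprints, segment state, and working-set gives $O(d\log m)$ space.

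The main obstacle is exactly this hierarchical-buffer design and its worst-case de-amortisation. In particular, one must coordinate the $d+1$ concurrently sliding segment windows through a single shared hierarchy, show that every past-character lookup is served by the explicit working-set in $O(1)$ worst-case time, and prove both the $O(d\log m)$ space bound and the $O(d+\log m)$ worst-case (rather than merely amortised) time bound per character. Once this routing-and-scheduling machinery is in place, the fingerprint-based correctness of the segment reduction is routine via a standard union bound on collisions across the $\Theta(dn)$ fingerprint comparisons performed over the stream.
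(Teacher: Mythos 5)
There is a genuine gap, and it is the very step you flag as ``the crux.'' Your reduction requires, at every step $\alpha$ and for every segment $i$, access to the character $t_{\alpha-m+1+s_i}$ that is leaving segment $i$'s window (or, equivalently in the prefix-fingerprint formulation, access to $\phi(t_0\cdots t_{\alpha-m+s_i})$). Since $s_0=0$, this includes a tap at delay $m-1$: at every step you must recover the character that arrived $m-1$ steps ago. Over a window of $m$ consecutive steps, \emph{every} one of the last $m$ characters is still needed at some future step, so any state from which these taps can be answered must determine $t_{\alpha-m+2},\ldots,t_{\alpha}$ and therefore must occupy $\Omega(m)$ words. The hierarchical structure you propose does not escape this: a Karp--Rabin fingerprint of a chunk is a lossy summary from which individual characters (or individual prefix fingerprints interior to the chunk) cannot be recovered, so ``pre-routing'' a character out of a summarized chunk into the working set is not possible. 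This is not a de-amortization subtlety that better scheduling can fix; it is an information-theoretic obstruction to the whole ``slide all $d+1$ segment windows in lockstep and verify every alignment'' plan.

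What actually makes sublinear space possible --- and what the paper does, following the Porat--Porat/Breslauer--Galil template --- is to abandon per-position verification in favor of a \emph{candidate} framework: a position is checked lazily, one pattern interval at a time, and is discarded the first time it fails, so dead positions cost nothing. The remaining (alive) candidates in a given text interval are then compressed. The technical heart of the paper is a special partitioning of $[0,m-1]$ into $O(d+\log m)$ intervals (Lemma~\ref{lem:partitioningProperties}), a periodicity argument showing that for each interval the alive candidates sharing a distinguished entrance prefix $u_I$ form an arithmetic progression storable in $O(1)$ words (Lemmas~\ref{lem:u_I}, \ref{lem:u_Iprogression}, \ref{lem:uI_candidates_constant_space}), and a global counting argument (Lemma~\ref{lem:boundSpace}) bounding the remaining, uncompressible candidates by $O(d\log m)$. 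Your proposal contains none of these ideas; the segment decomposition and fingerprint comparisons you describe correspond roughly to the paper's Section~\ref{sec:prelim-partition} preliminary scheme, which the paper explicitly notes costs $\Theta(m)$ space and is presented only as motivation. To repair your proof you would need to replace the always-on rolling windows with a candidate-queue mechanism and supply an argument bounding the number of simultaneously alive, uncompressible candidates.
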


\begin{theorem} \label{thm:tradeoff}
For any constant $0\leq \delta \leq 1$
there exists a randomized Monte Carlo algorithm for the PMDW problem in the streaming model that succeeds with probability $1-1/poly(n)$, uses $\tilde{O}(d^{1+\delta})$ words of space and spends $\tilde{O}(d^{1-\delta})$ amortized time per arriving text character.
\end{theorem}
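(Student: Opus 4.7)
The plan is to combine Theorem~\ref{thm:main} with a filter-and-verify strategy, using the additional $\tilde{O}(d^{1+\delta})$ words of space to drive the amortized per-character cost down from $\tilde{O}(d)$ to $\tilde{O}(d^{1-\delta})$. First I would construct a sub-pattern $P^\star$ by retaining only $\tilde{O}(d^{1-\delta})$ carefully chosen wildcards of $P$, treating each of the remaining wildcards as a fixed character that must be matched by a designated text position and will be checked separately. Running the algorithm of Theorem~\ref{thm:main} on $P^\star$ against the incoming stream costs $O(d^{1-\delta}+\log m)$ time per character and $\tilde{O}(d^{1-\delta})$ words of space, and yields a stream of \emph{candidate} positions that is guaranteed to contain every true occurrence of $P$ in $T$.

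Next, I would use the remaining budget of $\tilde{O}(d^{1+\delta})$ words to maintain a hierarchy of Karp--Rabin fingerprints of the sliding window of $T$ at geometrically increasing scales, so that the fingerprint of any text substring of length at most $m$ ending within the current window can be produced in $O(\log m)$ time. With such a structure in place, verifying a candidate against the $d$ wildcard-delimited gaps of $P$ costs $O(d)$ fingerprint comparisons, i.e.\ $O(d)$ total time (after amortization of the fingerprint bookkeeping).

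The amortization then boils down to showing that across the entire stream the number of candidates produced by the filter is at most $\tilde{O}(n/d^{\delta})$; combined with $O(d)$-time verification per candidate, the total verification cost is $\tilde{O}(n d^{1-\delta})$, matching the cost of the filter and giving the claimed $\tilde{O}(d^{1-\delta})$ amortized time per character. The intended source of the $d^{-\delta}$ factor in the candidate density is the random choice of which $d^{1-\delta}$ wildcards to expose to the filter: for a typical window, the additional constraints imposed by the retained wildcards eliminate all but a $1/d^{\delta}$ fraction of positions that would otherwise pass a simpler check.

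I expect the main obstacle to be controlling the candidate density when $P^\star$ happens to have a short period, since in that regime matches of $P^\star$ can legitimately cluster densely and the random-selection argument breaks down. My plan for this case is to detect periodicity on the fly (using Fine--Wilf-type reasoning on the stored fingerprints) and handle a dense run of candidates as a batch: within a single arithmetic progression induced by the period, all candidates share the same gap-structure relative to the period, so the $d$ gap fingerprints needed for verification can be reused and the entire batch dispatched in $\tilde{O}(d)$ time rather than $O(d)$ per candidate. Combining the aperiodic and periodic sub-cases yields the desired amortized bound, completing the proof.
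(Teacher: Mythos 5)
Your proposal has a genuine gap at its core: the claim that randomly retaining $\tilde{O}(d^{1-\delta})$ wildcards in $P^\star$ makes the filter emit only $\tilde{O}(n/d^{\delta})$ candidates. The text $T$ is adversarial, so there is no ``typical window'' to appeal to: an adversary that knows $P^\star$ (or even just $P$) can construct a text in which \emph{every} position is an occurrence of $P^\star$, regardless of which wildcards you retained, as long as $P^\star$ is highly periodic once its wildcards are ignored. Your plan acknowledges this periodic regime but proposes to fix it only by making each verification cheaper in a batch; that is not enough, because the amortized bound needs the \emph{number} of positions demanding $\Theta(d)$-cost handling to be $\tilde{O}(n/d^{\delta})$, and no such bound follows from the construction of $P^\star$. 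Separately, the construction of $P^\star$ itself is unclear: if a removed wildcard position in $P$ is ``treated as a fixed character,'' you must commit to a specific symbol, which would make the filter miss true occurrences of $P$; if it is still treated as a wildcard, you have gained nothing.

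The paper takes a fundamentally different route that addresses exactly this difficulty. It introduces the \emph{wildcard-period length} $\pi_S$ (Definition~\ref{def:wildcard-period}), which by its min-over-all-texts form gives a \emph{worst-case} upper bound of roughly $n/\pi_S$ on the number of occurrences of $S$ in any text. It then sets $P^*$ to be the longest prefix of $P$ with $\pi_{P^*}\le d^{\delta}$. If $P^*=P$, the pattern is handled entirely by a dedicated small-wildcard-period algorithm (Theorem~\ref{thm:smallWPAlgorithm}): decompose $P$ into offset columns modulo a prime $q$, run a streaming dictionary-matching algorithm on each offset text, form a column pattern/text, and repeat over a set $Q$ of $O(\log d)$ primes chosen via a Chinese-remainder-style covering argument (Lemma~\ref{lem:Qprimes}); the $|\Gamma_q|=O(\tau)$ bound (Lemma~\ref{lem:GammaBound}) controls space. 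If $P^*\ne P$, the small-period algorithm streams occurrences of $P^*$ into the candidate queues of the Theorem~\ref{thm:main} machinery starting at the interval $I^*$ immediately past $|P^*|$; by maximality of $P^*$, the prefix $P'$ of length $j^*+1$ has $\pi_{P'}>d^{\delta}$, which \emph{provably} limits occurrences of $P'$ in $T$ to $O(n/d^{\delta})$, and this — not a random-selection heuristic — is what makes the amortization go through. The dichotomy ``entire pattern has small $\pi$'' versus ``extension is provably sparse'' is the essential idea your proposal is missing, and you would need some analogue of $\pi_S$ to make a filter-and-verify strategy sound in the worst case.
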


\subsection{Algorithmic Overview}
Our algorithms make use of the notion of a \emph{candidate}, which is a location in the last $m$ indices of the current text that is currently considered as a possible occurrence of $P$. As more characters arrive, it becomes clear if this candidate is an actual occurrence or not. In general, an index continues to be a candidate until the algorithm encounters proof that the candidate is not a valid occurrence (or until it is reported as a match). The algorithm of Theorem~\ref{thm:main} works by obtaining such proofs efficiently.

\paragraph{Overview of algorithm for Theorem~\ref{thm:main}.}
For the streaming pattern matching problem without wildcards, the algorithms of Porat and Porat~\cite{PP09} and Breslauer and Galil~\cite{BG14} have three major components\footnote{The algorithms of Porat and Porat~\cite{PP09} and Breslauer and Galil~\cite{BG14} are not presented in this way. However, we find that this new way of presenting our algorithm (and theirs) does a better job of explaining what is going on.}. The first component is a partitioning of the interval $[0,m-1]$  into \emph{pattern intervals} of exponentially increasing lengths.
Each pattern interval $[i,j]$ corresponds to a \emph{text interval} $[\alpha-j+1,\alpha-i+1]$, where $\alpha$ is the index of the last text character that arrived\footnote{The first pattern interval starts at $0$, and so the last text interval ends at location $\alpha+1$, which is a location of a text character that has yet to arrive. To understand why this convention is appropriate, notice that initially every text location should be considered as a candidate, but in order to save space we only address such candidates a moment before their corresponding character arrives since this is the first time the algorithm can obtain proof that the candidate is not a match.}.
Notice that when a new text character arrives, the text intervals are shifted by one location.
The second component maintains all of the candidates in a given text interval. This implementation
leverages periodicity properties of strings in order to guarantee that the candidates in a given text interval form an arithmetic progression, and thus can be maintained with constant space. The third component is a fingerprint mechanism for testing if a candidate is still valid. Whenever the border of a text interval passes through a candidate, that candidate is tested.

The main challenge in applying the above framework for patterns with wildcards comes from the lack of a good notion of periodicity which can guarantee that the candidates in a text interval form an arithmetic progression.
To tackle this challenge, we design a new method for partitioning the pattern into intervals, which, combined with new fundamental combinatorial properties, leads to an efficient way for maintaining the candidates in small space. In particular, we prove that with our new partitioning there are at most $O(d\log m)$ candidates that are not part of any arithmetic progression for any text interval. Remarkably, the proof bounding the number of such candidates uses a more global perspective of the pattern, as opposed to the techniques used in non-wildcard results.

\paragraph{Overview of algorithm for Theorem~\ref{thm:tradeoff}.}
The algorithm of Theorem~\ref{thm:tradeoff} uses the algorithm of Theorem~\ref{thm:main} (with a minor adaptation) combined with a new combinatorial perspective on \emph{periodicity} that applies to strings with wildcards. The notion of periodicity in strings (without wildcards) and its usefulness are well studied~\cite{FW65, KMP77,PP09,BG14,Gawrychowski13,GS83}. However, extending the usefulness of periodicity to strings with wildcards runs into difficulties, since the notions are either too inclusive or too exclusive (see~\cite{Blanchet08,BB99,BH02,CMSWY03,SW09}). Thus, we introduce a new definition of periodicity, called the \emph{wildcard-period length} that captures, for a given pattern with wildcards, the smallest possible average distance between occurrences of the pattern in any text. See Definition~\ref{def:wildcard-period}. For a string $S$ with wildcards, we denote the wildcard-period length of $S$ by $\pi_S$.

Let $P^{*}$ be the longest prefix of $P$ such that $\pi_{P^*} \leq d^\delta$.
The algorithm of Theorem~\ref{thm:tradeoff} has two main components, depending on whether $P^* = P$ or not.
In the case where $P^* = P$, the algorithm takes advantage of the wildcard-period length of $P$ being small, which, together with techniques from number theory and new combinatorial properties of strings with wildcards, allows to spend only $\tilde{O}(1)$ time per character and uses $\tilde{O}(d^{1+\delta})$ words of space. This is summarized in Theorem~\ref{thm:smallWPAlgorithm}. Of particular interest is Lemma~\ref{lem:GammaBound} which combines number theory with combinatorial string properties in a new way. We expect these ideas to be useful in other applications.

If $P^* \neq P$, then we use the algorithm of Theorem~\ref{thm:smallWPAlgorithm} to locate occurrences of $P^*$, and by maximality of $P^{*}$, occurrences of prefixes of $P$ that are longer than $P^{*}$ must appear far apart (on average).
These occurrences are given as input to a minor adaptation of the algorithm of Theorem~\ref{thm:main} in the form of candidates.
Utilizing the large average distance between candidates, we obtain an $\tilde{O}(d^{1-\delta})$ amortized time cost per character.

\section{Preliminaries} \label{Preliminaries}

\subsection{Periods}\label{PeriodsSection}
We assume without loss of generality that the alphabet is $\Sigma=\{1,2,\ldots,n\}$.
For a string $S=s_0s_1\dots s_{\ell-1}$ over $\Sigma$ and integer $1\leq k\leq \ell$ , the substring $s_0s_1\dots s_{k-1}$ is called a \emph{prefix} of $S$ and $s_{\ell-k}\dots s_{\ell-1}$ is called a \emph{suffix} of $S$.

A prefix of $S$ of length  $i\geq 1$ is a \emph{period} of $S$ if and only if $s_j=s_{j+i}$ for every $0\leq j \leq \ell-i-1$.
The shortest period of $S$ is called \emph{the principle period} of $S$, and its length is denoted by  $\rho_S$.
If $\rho_S\leq \frac{|S|}{2}$ we say that $S$ is \textit{periodic}.

The following lemma is due to Breslauer and Galil~\cite{BG14}.
\begin{lemma}[{\cite[Lemma 3.1]{BG14}}]\label{lem:BG31}
Let $u$ and $v$ be strings such that $u$ contains at least three occurrences of $v$. Let $t_1<t_2<\dots <t_h$ be the locations of all occurrences of $v$ in $u$. Assume that $h \geq 3$ and that for $i = 1,\dots, h - 2$, we have $t_{i+2} - t_i \leq |v|$. Then, the sequence $(t_1,t_2,\dots, t_h)$  forms an arithmetic progression with difference $\rho_v$.
\end{lemma}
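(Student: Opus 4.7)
The plan is to derive the statement from the Fine--Wilf periodicity theorem applied to $v$. Fix any index $i \in \{1,\dots,h-2\}$ and set $p_1 = t_{i+1}-t_i$ and $p_2 = t_{i+2}-t_{i+1}$; the hypothesis $t_{i+2}-t_i \le |v|$ gives $p_1+p_2 \le |v|$, and in particular $p_1,p_2 < |v|$. Since the occurrences of $v$ at $t_i$ and $t_{i+1}$ overlap in a common substring of $u$, reading that overlap once as a suffix of the first copy of $v$ and once as a prefix of the second copy shows that $p_1$ is a period of $v$; the same argument applied to $t_{i+1}$ and $t_{i+2}$ shows that $p_2$ is also a period of $v$.

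The second step is to show that $\rho_v$ divides both $p_1$ and $p_2$. By minimality of $\rho_v$ we have $\rho_v \le p_2$, so $p_1 + \rho_v \le p_1 + p_2 \le |v|$, and Fine--Wilf applied to the two periods $p_1$ and $\rho_v$ of $v$ guarantees that $\gcd(p_1,\rho_v)$ is also a period of $v$. Minimality of $\rho_v$ then forces $\gcd(p_1,\rho_v) = \rho_v$, i.e.\ $\rho_v \mid p_1$. The symmetric argument yields $\rho_v \mid p_2$.

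The third step is to show that the text $u$ restricted to the window $W = [t_i,\, t_{i+2}+|v|-1]$ has period $\rho_v$, from which it will follow that $v$ occurs at position $t_i + \rho_v$. Since $p_1,p_2 \le |v|$, every position of $W$ is covered by at least one of the three occurrences of $v$ at $t_i, t_{i+1}, t_{i+2}$, so every character in $W$ is pinned down by at least one copy of $v$. A short case check then verifies that, for any two positions of $W$ that lie $\rho_v$ apart, the two readings agree: when both lie inside the same occurrence this is immediate from $v$ having period $\rho_v$; when they lie in different occurrences the divisibilities $\rho_v \mid p_1$ and $\rho_v \mid p_2$ established above make the involved copies of $v$ align on the same period-$\rho_v$ pattern. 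Hence $u$ is $\rho_v$-periodic on $W$, and since $u[t_i\,..\,t_i+|v|-1]=v$ we conclude that $u[t_i+\rho_v\,..\,t_i+\rho_v+|v|-1]=v$ as well.

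Finally, if $\rho_v < p_1$ then $t_i < t_i+\rho_v < t_{i+1}$, exhibiting an occurrence of $v$ strictly between $t_i$ and $t_{i+1}$ and contradicting the assumption that the $t_j$'s enumerate \emph{all} occurrences of $v$ in $u$. Therefore $p_1 = \rho_v$, and by the symmetric argument $p_2 = \rho_v$. Running this over every consecutive triple $(t_i,t_{i+1},t_{i+2})$ for $i=1,\dots,h-2$ forces every consecutive gap to equal $\rho_v$, which is exactly the desired arithmetic progression. The part I expect to require the most care is the case analysis in step three: verifying $\rho_v$-periodicity across the seams between adjacent occurrence windows is mechanical but genuinely depends on the divisibility conclusions of step two, and one must take care not to conflate ``$v$ has period $\rho_v$'' with ``$u$ has period $\rho_v$ on $W$''.
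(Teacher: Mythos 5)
The paper does not supply its own proof of this lemma; it is imported verbatim from Breslauer and Galil~\cite{BG14} (their Lemma~3.1), so there is no in-paper argument to compare against. Your blind proof is correct and is essentially the standard periodicity-lemma derivation that Breslauer and Galil themselves use: overlapping occurrences give the two periods $p_1,p_2$ of $v$; Fine--Wilf (with the length bound $|v|\ge p_1+\rho_v$) forces $\rho_v\mid p_1$ and $\rho_v\mid p_2$; the window covered by the three occurrences is then $\rho_v$-periodic, so $v$ must also occur at $t_i+\rho_v$, contradicting completeness of the list whenever $p_1>\rho_v$ (and symmetrically for $p_2$), which collapses every consecutive gap to $\rho_v$.

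Two minor points for a polished write-up. First, in step three it suffices to establish $\rho_v$-periodicity on the smaller window $[t_i,\,t_{i+1}+|v|-1]$ in order to exhibit the occurrence at $t_i+\rho_v$; the full window $W=[t_i,\,t_{i+2}+|v|-1]$ is only needed if you want to derive $p_2=\rho_v$ from the same triple rather than appealing to the shifted triple. Second, the seam case should be stated explicitly as: if $a$ is read from the copy at $t_j$ and $a+\rho_v$ from the copy at $t_k$, then $(a+\rho_v-t_k)-(a-t_j)=\rho_v-(t_k-t_j)\equiv 0\pmod{\rho_v}$ since $t_k-t_j$ is a sum of the $p$'s, and $v[x]=v[y]$ whenever $x\equiv y\pmod{\rho_v}$ with $x,y\in[0,|v|-1]$; this is exactly where the divisibilities from Fine--Wilf are used, as you flag. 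Also verify (as you implicitly do) that the putative occurrence at $t_i+\rho_v$ lies entirely inside $u$, which holds because $\rho_v\le p_1+p_2=t_{i+2}-t_i$ and the occurrence at $t_{i+2}$ already fits inside $u$.
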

The following lemmas follow from Lemma 3.

\begin{lemma} \label{lem:threeOcc}
Let $v$ be a string of length $\ell$ and let $u$ be a string of length at most $2\ell$. If $u$ contains at least three occurrences of $v$ then the distance between any two occurrences of $v$ in $u$ is a multiple of $\rho_v$ and  $v$ is a periodic string.
\end{lemma}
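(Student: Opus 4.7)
The plan is to reduce Lemma~\ref{lem:threeOcc} to Lemma~\ref{lem:BG31} (Lemma 3 in the excerpt) by verifying that its hypothesis on the gaps $t_{i+2}-t_i$ holds automatically under the length bound $|u|\le 2\ell$.

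First I would set up notation: let $t_1<t_2<\dots<t_h$ be the starting positions of the occurrences of $v$ in $u$, with $h\ge 3$ by hypothesis. The key observation is that because each occurrence of $v$ must lie entirely inside $u$, we have $t_h\le |u|-\ell$, while $t_1\ge 0$. Combined with $|u|\le 2\ell$, this gives
\[
t_h-t_1 \;\le\; |u|-\ell \;\le\; \ell \;=\; |v|.
\]
In particular, for every $i\in\{1,\dots,h-2\}$,
\[
t_{i+2}-t_i \;\le\; t_h-t_1 \;\le\; |v|,
\]
so the hypothesis of Lemma~\ref{lem:BG31} is satisfied. Applying that lemma, the sequence $(t_1,\dots,t_h)$ is an arithmetic progression with common difference $\rho_v$. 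Hence $t_j-t_i=(j-i)\rho_v$ for all $i<j$, proving that every pairwise distance between occurrences of $v$ in $u$ is a multiple of $\rho_v$.

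For the second claim, I would use that $h\ge 3$ together with the progression we just established: the distance $t_3-t_1=2\rho_v$ is realized inside $u$, so
\[
2\rho_v \;=\; t_3-t_1 \;\le\; t_h-t_1 \;\le\; \ell,
\]
which yields $\rho_v\le \ell/2=|v|/2$. By the definition of periodicity in Section~\ref{PeriodsSection}, this means $v$ is periodic, completing the proof.

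The only subtlety — and the place I would be careful — is the inequality $t_h\le |u|-\ell$, since the positions are zero-indexed and one has to make sure the entire copy of $v$ fits; once that is pinned down, the rest is immediate from Lemma~\ref{lem:BG31}. I do not foresee a real obstacle here, as the lemma is essentially a convenient reformulation of Breslauer and Galil's statement in the tight regime $|u|\le 2|v|$.
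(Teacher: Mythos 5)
Your proof is correct and follows essentially the same route as the paper's: both reduce to Lemma~\ref{lem:BG31} by using $|u|\le 2\ell$ to bound the span of occurrences by $|v|$, and both derive $\rho_v\le|v|/2$ from having three occurrences within that span. The only cosmetic difference is that you bound $t_h-t_1$ once for all occurrences, whereas the paper bounds $c_3-c_1$ for an arbitrary triple; the two are interchangeable.
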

\begin{proof}
Let $0\leq c_1<c_2<c_3 \leq |u|-1$ be three occurrences of $v$ in $u$.
Thus, $c_3\leq (|u|-1)-(|v|-1)\leq 2\ell-\ell=\ell$, and so  $c_3-c_1\leq\ell$.
Therefore, by Lemma~\ref{lem:BG31}, all the occurrences of $v$ in $u$ form an arithmetic progression with common difference $\rho_v$. In particular, the distance between any two occurrences of $v$ in $u$ is a multiple of $\rho_v$.
Hence, $\rho_v+\rho_v \leq(c_3-c_2)+(c_2-c_1)=c_3-c_1\leq \ell= |v|$ and $\rho_v\leq \frac{|v|}{2}$. Thus, by definition, $v$ is a periodic string.
\end{proof}

\begin{lemma}\label{lem:substringPeriod}
Let $u$ be a periodic string over $\Sigma$ with principle period length $\rho_u$.
If $v$ is a substring of $u$ of length 	at least $2\rho_u$ then $\rho_u=\rho_v$.
\end{lemma}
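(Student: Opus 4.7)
The plan is to sandwich $\rho_v$ between two bounds and conclude equality. First I would establish $\rho_v \leq \rho_u$: since $v$ is a substring of $u$ at some position, writing $v = u_a u_{a+1} \cdots u_{a+|v|-1}$ and restricting the relation $u_i = u_{i+\rho_u}$ to the window of $v$ yields $v_j = v_{j+\rho_u}$ for every valid $j$, using that $\rho_u \leq |v|$ by hypothesis. Thus $\rho_u$ is a period of $v$, and the minimality of the principle period of $v$ forces $\rho_v \leq \rho_u$.

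Next I would invoke the Fine--Wilf theorem on $v$ with its two known periods $\rho_v$ and $\rho_u$. Because $\rho_v + \rho_u \leq 2\rho_u \leq |v|$, the Fine--Wilf hypothesis is comfortably satisfied, and $v$ must admit $\gcd(\rho_v,\rho_u)$ as a period. The minimality of $\rho_v$ then forces $\gcd(\rho_v,\rho_u) = \rho_v$, i.e., $\rho_v \mid \rho_u$.

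The main step, which I expect to be the hard part, is to lift $\rho_v$ from being a period of $v$ to being a period of $u$; once this is done, the minimality of $\rho_u$ immediately gives $\rho_u \leq \rho_v$, closing the sandwich. For the lift, fix any index $i$ with $0 \leq i \leq |u|-\rho_v-1$ and pick the unique $j \in [a, a+\rho_u-1]$ congruent to $i$ modulo $\rho_u$. Iterating the $\rho_u$-periodicity of $u$ inside its valid range yields both $u_i = u_j$ and $u_{i+\rho_v} = u_{j+\rho_v}$; the latter is where the slack $|v| \geq 2\rho_u$ is actually used, since it guarantees $j+\rho_v \leq a + 2\rho_u - 1 \leq a + |v| - 1$, so the index stays inside $v$ (and hence inside $u$). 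Applying the $\rho_v$-period of $v$ to the pair $j, j+\rho_v \in [a, a+|v|-1]$ gives $u_j = u_{j+\rho_v}$, and chaining the three equalities produces $u_i = u_{i+\rho_v}$ as desired.

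The only delicate part of the argument is the boundary bookkeeping in the lifting step: every invocation of $u$'s period or $v$'s period must land on indices that lie inside the respective string. The hypothesis $|v| \geq 2\rho_u$ is exactly calibrated to provide this slack, and it is precisely where a tighter assumption such as $|v| \geq \rho_u + \rho_v$ would start to cause the argument to break down near the right endpoint of $v$.
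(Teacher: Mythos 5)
Your proof is correct and follows essentially the same strategy as the paper: bound $\rho_v \leq \rho_u$ by restriction, then lift $\rho_v$ to a period of $u$ by shifting an arbitrary index $i$ via $\rho_u$-periodicity to the representative $j \in [a, a+\rho_u-1]$, apply the $\rho_v$-period inside $v$ there, and shift back. The Fine--Wilf paragraph is a correct observation but plays no role in your argument (you never use $\rho_v \mid \rho_u$), so it is a detour; also, your closing remark is slightly off --- the weaker hypothesis $|v| \geq \rho_u + \rho_v$ would \emph{still} suffice for the bookkeeping, since $j + \rho_v \leq a + \rho_u - 1 + \rho_v \leq a + |v| - 1$; the lemma simply states $2\rho_u$ because $\rho_v$ is not known a priori.
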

\begin{proof}
Since $v$ is a substring of $u$, we have by definition that $\rho_u$ is a period length of $v$, and thus $\rho_v\le \rho_u$ by the minimality of $\rho_v$.

It only remains to prove that $\rho_u\le \rho_v$, which we do by showing that $\rho_v$ is a period length of $u$.
We denote $u=u_0u_1\dots u_{|u|-1}$ .

Let $0\le i<|u|-\rho_v$ be an index in $u$, we have to prove that $u_i=u_{i+\rho_v}.$

Let $a$ be an index such that $v$ occurs in $u$ in position $a$, thus $u_au_{a+1}\dots u_{a+2\rho_u-1}$ is a substring of both $u$ and $v$.
Since $\rho_u$ is a period length of $u$, $u_i=u_{i+z\cdot \rho_u}$ for any $z\in \mathbb Z$ if $0\le i+z\cdot \rho_u<|u|$. In particular, for $z=\ceil{\frac{a-i}{\rho_u}}$ we have that $u_i=u_{i+\ceil{\frac{a-i}{\rho_u}}\rho_u}$.
Let $b=i+z\cdot \rho_u$. Notice that $a\le b<a+\rho_u$ and $a\le b+\rho_v<a+2\rho_u$. Therefore, $b$ and $b+\rho_v$ are both indices of characters in $v$, and thus $u_b=u_{b+\rho_v}$.
Hence, we have that $u_i=u_{i+z\cdot \rho_u}=u_{i+z\cdot \rho_u+\rho_v}=u_{i+\rho_v}$, where the last equality is based again on the fact that $\rho_u$ is a period length of $u$.
\end{proof}

\paragraph{Periods and wildcards.} For a string $u$ with no wildcards, there is an inverse relationship between the maximum number of occurrences of $u$ in a text of a given length and the principle period length of $u$. Next, we define the \emph{wildcard-period length} of a string over $\Sigma \cup \{?\}$ which captures a similar type of relationship for strings \emph{with} wildcards. The usefulness of this definition for our needs is discussed in more detail in Section~\ref{sec:tradeoff}.
Let $occ(S',S)$ be the number of occurrences of a string $S$ in a string $S'$.
\begin{definition}\label{def:wildcard-period}
For a string $S$ over $\Sigma \cup \{?\}$, the wildcard-period length of $S$ is
$$\pi_S=\min_{S'\in \Sigma^{2|S|-1} }\left \{ \ceil{\frac{|S|}{ occ(S',S)} } \right \}.$$
\end{definition}

\subsection{Fingerprints}\label{FingerprintsSection}
For the following let $u,v\in \bigcup _{i=0} ^n {\Sigma^i}$  be two strings of size at most $n$.
Porat and Porat~\cite{PP09} and Breslauer and Galil~\cite{BG14} proved the existence of a \textit{sliding fingerprint function} $\phi : \bigcup _{i=0} ^n {\Sigma^i} \rightarrow[n^c]$, for some constant $c>0$, which is a function where:

\begin{enumerate}

\item If $|u|=|v|$ and $u\neq v$ then $\phi(u)\neq\phi(v)$ with high probability (at least $1-\frac{1}{n^{c-1}}$).
\item \emph{The sliding property:} Let $w$=$uv$ be the concatenation of $u$ and $v$. If $|w|\leq n$ then given the length and the fingerprints of any two strings from $u$,$v$ and $w$, one can compute the fingerprint of the third string in constant time.

\end{enumerate}
\begin{figure}
\begin{codebox}
\Procname{$\proc{Init}()$ }
\li 	$Q_0.\Enqueue(0)$
\end{codebox}
\begin{codebox}
\Procname{$\proc{Process-Character}(t_\alpha)$ }
\li \For $h \gets 0$ \To $k$
\li		\Do	
$c \gets Q_h.\Dequeue()$
\li				\If $c$ exists and $c$ is valid
\Then
\li				\If $h=k$
\Then
\li					report $c$ as a match
\li				\Else
$Q_{h+1}.\Enqueue(c)$
\End
\End
\End
\End
\li $Q_0.\Enqueue(\alpha+1)$
\end{codebox}
\caption{Generic Algorithm. The purpose of the initialization is to consider location $0$ as a candidate before any candidate has arrived.}
\label{fig:GenericAlog}
\end{figure}

\section{A Generic Algorithm}\label{sec:generic}

We start with a generic algorithm (pseudo-code is given in Figure~\ref{fig:GenericAlog}) for solving  pattern matching problems in the streaming model. With proper implementations of the algorithm's components, the algorithm solves the PMDW problem.
The generic algorithm makes use of the notion of a \emph{candidate}.
Initially every text index $c$ is considered as a candidate for a pattern occurrence from the moment $t_{c-1}$ arrives.
An index continues to be a candidate until the algorithm encounters proof that the candidate is not a valid occurrence (or until it is reported as a match).
A candidate is \emph{alive} until such proof is given.

The generic algorithm is composed of three conceptual parts that affect the complexities of the algorithm. An example of an execution of the generic algorithm appears in Figures~\ref{fig:GenericPattern} and~\ref{fig:GenericPattern2}:
\begin{itemize}
\item{\textbf{Pattern and text intervals.}}
The first part is an ordered list $\mathcal{I}=(I_0,\dots, I_k)$ of intervals.
The disjoint union of the intervals of $\mathcal{I}$ is exactly $[0,m-1]$ and the intervals are ordered such that $I=[i,j]$ precedes $I'=[i',j']$ if and only if $j<i'$.
Each interval $I\in \mathcal{I}$ is called a \textbf{pattern interval}.
For each pattern interval $I=[i,j]\in\mathcal I$ we define a corresponding \textbf{text interval}, $\ti{I}{\alpha}=[\alpha-j+1,\alpha-i+1]$.
When character $t_\alpha$ arrives, a text location $c\in \ti{I}{\alpha}$ is a candidate  if and only if $t_c\cdots t_{c+i-1}$ matches $p_0\cdots p_{i-1}$.
The  \textbf{candidate set} $\candidates{I}{\alpha}$ is the set of text positions in $\ti{I}{\alpha}$ which are candidates right after the arrival of $t_\alpha$.

\item{\textbf{Candidate queues.}}
The second conceptual part of the generic algorithm is an implementation of a \textbf{candidate-queue} data structure.
For any interval $I\in\mathcal I$, the algorithm maintains a candidate queue $Q_I$.
At any time $\alpha$, which is the time right after $t_\alpha$  arrives, but before $t_{\alpha+1}$ arrives, $Q_I$ stores a (possibly implicit) representation of $\candidates{I}{\alpha}$.
Thus, the operations of the data structure are \emph{time-dependent}.
Candidate-queues support the following operations.
\begin{definition}
A candidate-queue for an interval $[i,j]=I\in\mathcal I$ supports the following operations at time, where $t_\alpha$ is the last text character that arrived.
\begin{enumerate}
\item{} $\Enqueue()$: add $c=\alpha -i +1$ to the candidate-queue.
\item{} $\Dequeue()$: remove and return a candidate $c = \alpha -j $, if such a candidate exists.
\end{enumerate}

\end{definition}

Since there is a bijection between pattern intervals and text intervals we say that a candidate-queue that is associated with pattern interval $I$ is also associated with the corresponding  text interval $\ti{I}{\alpha}$.

\item{\textbf{Assassinating candidates.}}
The third conceptual part addresses the following.
When a new text character arrives, all the text intervals move one position ahead, and some candidates leave some text intervals and their corresponding candidate sets.
The third conceptual part is a mechanism for testing if a candidate is valid after that candidate leaves a candidate set. This mechanism is used in order to determine if the candidate should enter the candidate-queue of the next text interval, or be reported as a match if there are no more text intervals.
\end{itemize}

The implementation of each of the three components controls the complexities of the algorithm.
Minimizing the number of intervals reduces the number of candidates leaving text intervals at a given time.
Efficient implementations of the candidate-queue operations and testing if a candidate is valid control both the space usage and the amount of time spent on each candidate that leaves an interval.
Notice that the implementations of these components may depend on each other, which is also the case in our solution.

\begin{figure}[!ht]
\begin{center}
\includegraphics[width=0.5\textwidth]{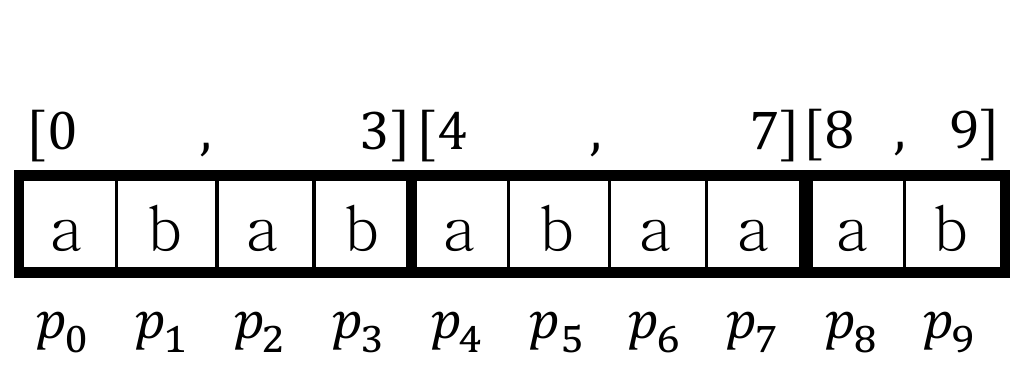}
\end{center}

\caption{Example of a pattern and its arbitrarily chosen pattern intervals. The pattern length is $10$ and the pattern intervals are $[0,3]$, $[4,7]$ and $[8,9]$.}
\label{fig:GenericPattern}
\end{figure}
\begin{figure}[!ht]

\begin{center}
\includegraphics[width=0.75\textwidth]{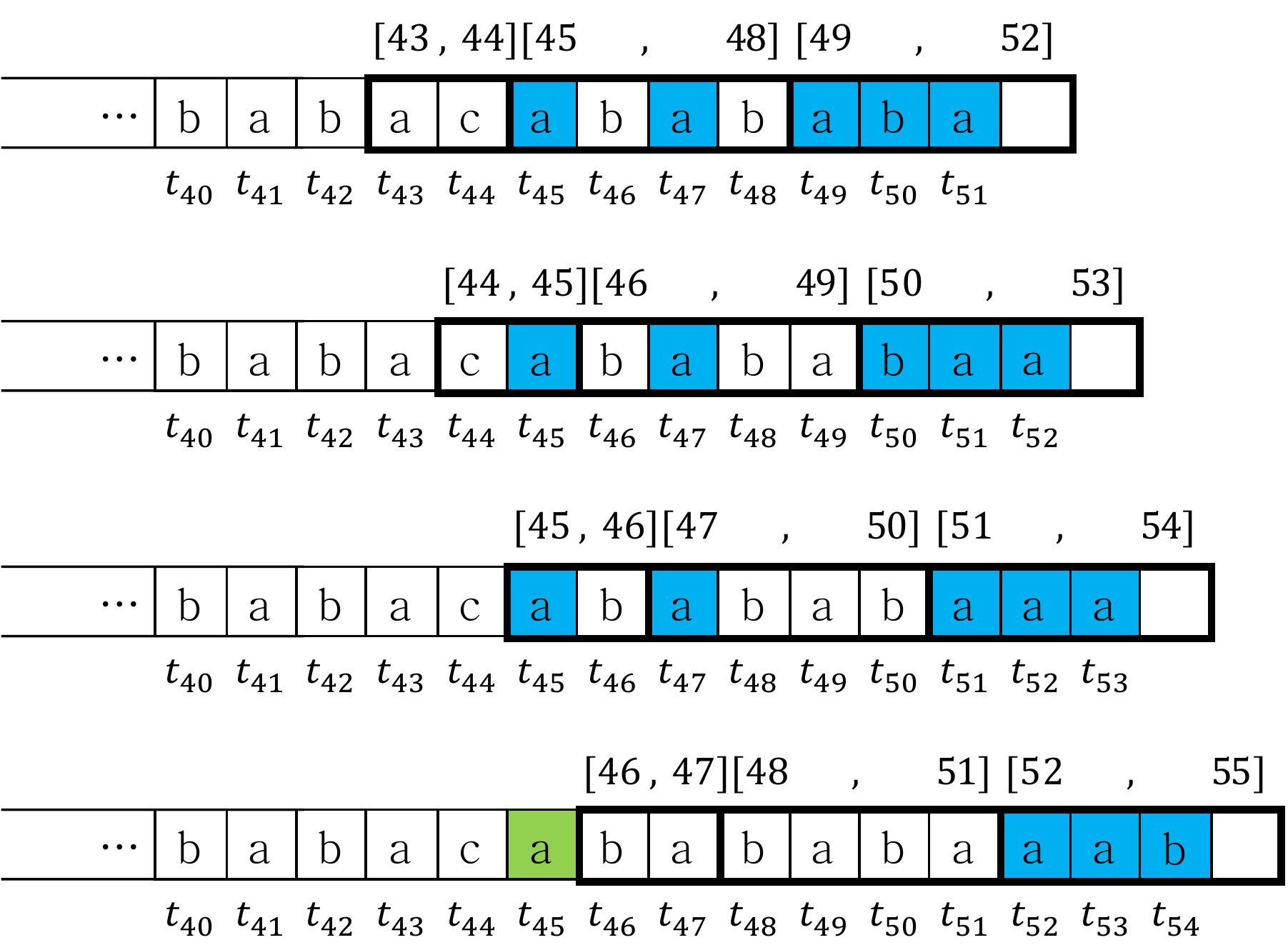}
\end{center}
\caption[LoF entry]{Example of an execution of the generic algorithm with the pattern of Figure~\ref{fig:GenericPattern}. In each row a new text character arrives. The bold borders illustrate the text intervals. Each blue cell is a position of a candidate and the green cell corresponds to a match.

When $t_{52}$ arrives, the candidate $c_1=45$ is tested, since it exits a text interval. The candidate $c_1$ remains alive because $abababaa$ is a prefix of the pattern. Notice that at this time  the candidate $c_2=47$ in not a valid occurrence of the pattern, but, the algorithm does not remove $c_2$  until $c_2$ reaches the end of the text interval.

When $t_{54}$ arrives, the candidates $c_1=45$ and $c_2=47$ are tested, as they have reached the end of their text intervals. At this time, $c_2$ is removed since the text $ababaaab$ is not a prefix of the pattern. The candidate $c_1$ remains alive and is reported as a match, since $c_1$ reached the end of the last text interval. }
\label{fig:GenericPattern2}
\end{figure}

\paragraph{A \naive implementation.}
The following \naive implementation of the generic algorithm is helpful for gaining intuition as to how the algorithm works.
Let $\mathcal{I}_{\textit{\naive}} =([0,0],[1,1],\dots,[m-1,m-1])$. The implementation of candidate queue $Q_I$  explicitly stores  the set $\candidates{I}{\alpha}$ at time $\alpha$. Notice that $\candidates{I}{\alpha}$  contains at most one candidate. The task of verifying that a candidate $c$ is valid in between text intervals is a straightforward comparison of $p_{\alpha-c}$ with $t_\alpha$. Each such comparison costs $O(1)$ time. The runtime of the algorithm is $\Theta(m)$ time per character in the worst-case, and the space usage is also $\Theta(m)$.\footnote{For example, if the pattern is $aa\dots a= a^m$ and the text is $a^n$, then each candidate $c$ is alive as long as the characters $t_c,\dots t_{c+m-1}$ arrive. Therefore, after the arrival of the first $m-1$ characters, any additional arriving character is compared with $m$ pattern characters.} We refer to this algorithm as the \emph{\naive algorithm}.

\paragraph{Using fingerprints.}
If there are no wildcards in $P$, then one can use the following fingerprint based algorithm that verifies the validity of a candidate $c$ only once all the characters $t_c,t_{c+1},\dots ,t_{c+m-1}$ have arrived.
This algorithm is closely related to the Karp and Rabin~\cite{KR87} algorithm.
The algorithm uses a partitioning of $[0,m-1]$ into only one interval containing all of $[0,m-1]$.

The algorithm maintains the \emph{text fingerprint} which is the fingerprint of the text from its beginning up to the last arriving character.
For each text index $c$, just before $t_c$ arrives the algorithm creates a candidate for the index $c$ and stores the text fingerprint $\phi(t_0t_1\dots t_{c-1})$ as satellite information of the candidate $c$. Then, $c$ (together with its satellite information) is added to the candidate-queue via the $\Enqueue()$ operation.
When the character $t_{c+m-1}$ arrives, the text fingerprint is $\phi(t_0\dots t_{c+m-1})$.
At this time, the algorithm uses the $\Dequeue()$ operation to extract $c$ together with $\phi(t_0t_1\dots t_{c-1})$ from the candidate-queue.  Then, the algorithm tests if $c$ is valid by computing $\phi(t_c\dots t_{c+m-1})$ from the current text fingerprint $\phi(t_0t_1\dots t_{c+m-1})$ and the fingerprint $\phi(t_0t_1\dots t_{c-1})$ (using the sliding property of the fingerprint function), and then testing if
$\phi(t_c\dots t_{c+m-1})$ equals $\phi(p_0\dots p_{m-1})$.
The fingerprint algorithm spends only constant time per text character, but, like the \naive algorithm, uses $\Theta(m)$ words of space to store the candidate-queue.

\subsection{Fingerprints with Wildcards}\label{sec:prelim-partition}
Using fingerprints together with wildcards seems to be a difficult task, since for any string $S$ with $x$ wildcards there are $|\Sigma|^x$  different strings over $\Sigma$ that match the string $S$. Each one of these different strings may have a different fingerprint and therefore there are $\Theta(|\Sigma|^x)$ fingerprints to store, which is not feasible.
In order to still use fingerprints for solving PMDW we use a special partitioning of $[0,m-1]$, which is described in Section~\ref{sec:second-partition}. The partitioning in Section~\ref{sec:second-partition} is based on the following preliminary partitioning.

\paragraph{The preliminary partitioning.}
We use a representation of $P$ as $P=P_0?P_1?\dots ?P_d$ where each subpattern $P_i$ contains only characters from $\Sigma$ (and may also be an empty string).
Let $W=(w_1,w_2,\dots,w_d)$ be the  indices of wildcards in $P$ such that for all $1\leq i < d$ we have $w_i<w_{i+1}$. The interval $[0,m-1]$ is partitioned into pattern intervals as follows:
$$ \mathcal{J} = ([0,w_1 -1], [w_1,w_1], [w_1+1,w_2-1] , \dots ,[w_d,w_d] ,[w_d+1,m-1]).$$
Since some of the pattern intervals in this partitioning could be empty, we discard such intervals.
The pattern intervals of the form $[w_i, w_i]$ are called \textit{wildcard interval}s and the other pattern intervals are called \textit{regular intervals}.
Notice that for a text index $c$, the substring $t_{c} \dots t_{c+m-1}$ matches $P$ if and only if for each regular interval $[i,j]$, $t_{c+i} \dots t_{c+j}=p_i\dots p_j$.

\paragraph{A preliminary algorithm.}
Given the preliminary partition $\mathcal{J}$, one could use the following algorithm for testing the validity of a candidate $c$ whenever it leaves a text interval.
During the initialization of the algorithm we precompute and store the fingerprints for all of the subpatterns corresponding to regular intervals.
Each time a candidate $c$ is added to a candidate-queue for interval $[i,j]\in \mathcal{J}$ via the $\Enqueue()$ operation, the algorithm stores the current text fingerprint $\phi(t_0\dots t_{c+i-1})$ together with the candidate $c$.
When the character $t_{c+j}$ arrives, the text fingerprint is $\phi(t_0\dots t_{c+j})$.
At this time, the algorithm uses the $\Dequeue()$ operation to extract $c$ together with $\phi(t_0t_1\dots t_{c+i-1})$ from the candidate-queue of interval $[i,j]$.
If $[i,j]$ is a regular interval, then the algorithm tests if $c$ is valid, and removes (assassinates) $c$ if it is not. This validity test is executed by applying the sliding property of the fingerprint function to compute $\phi(t_{c+i} \dots t_{c+j})$ from the current text fingerprint $\phi(t_0t_1\dots t_{c+j})$ and the fingerprint $\phi(t_0t_1\dots t_{c+i-1})$, and then testing if
$\phi(t_{c+i}\dots t_{c+j})$ is the same as $\phi(p_i\dots p_{j})$.
If $[i,j]$ is a wildcard interval then $c$ stays alive without any testing.

A \naive implementation of the candidate queues provides an algorithm that costs $O(d)$ time per character, but uses $\Theta(m)$ words of space. To overcome this space usage we employ a more complicated partitioning, which, together with a modification of the requirements from the candidate-queues, allows us to design a data structure that uses much less space. However, this space efficiency comes at the expense of a slight increase in the time per character.

\section{The Partitioning}\label{sec:second-partition}
The key idea of the new partitioning is to use the partitioning of Section~\ref{sec:prelim-partition} as a preliminary partitioning, and then perform a secondary partitioning of the regular pattern intervals, thereby creating even more regular intervals.
As mentioned, the intervals are partitioned in a special way which allows us to implement candidate-queues in a compact manner (see Section~\ref{sec:fingerprint-queues}).

The following definition is useful in the next lemma.
\begin{definition}
For an ordered set of intervals $\mathcal{I} =(I_0,I_1,\dots I_k)$
and for any integer $0\leq x \leq k$, let  $\mu_{\mathcal{I}}(x)= \max_{0\le y \le x} \left \{ |I_y| \right \}$
be the length of the longest interval in the sequence $I_0,\dots I_x$. When $\mathcal{I}$ is clear from context we simply write
$\mu(x)= \mu_{\mathcal{I}}(x) $

\end{definition}

The following lemma provides a partitioning which is used to improve the preliminary partitioning algorithm. The properties of the partitioning that are described in the statement of the lemma are essential for our new algorithm.
The most essential property is property~\ref{property:prefixSequence}, since it guarantees that  for each pattern interval $I=[i,j]$,  there exists a substring of $P$ prior to $p_i$ and with no wildcards whose length is $|I|$. If this substring is not periodic, then for any $\alpha$, $\candidates{I}{\alpha}$
does not contain more than two candidates.
If this substring is periodic,  then we show how to utilize the periodicity of the string in order to efficiently maintain all the candidates in $\candidates{I}{\alpha}$
for any $\alpha$ (see Section~\ref{sec:fingerprint-queues}).
In the proof of the lemma we introduce a specific partitioning which has all of the stated properties.

\begin{lemma} \label{lem:partitioningProperties}
Given a pattern $P$ of length $m$
with $d$ wildcards,
there exists a partitioning of the interval $[0,m-1]$ into subintervals $\mathcal{I}=(I_0,I_1 \dots, I_k)$
which has the following properties:
\begin{enumerate}
\item If $I=[i,j]$ is a pattern interval then $p_i\dots p_j$  either corresponds to exactly one wildcard from $P$ (and so $j=i$) or it is a substring that does not contain any wildcards.
\item $k=O(d+\log m)$.
\item \label{property:prefixSequence}
For each regular pattern interval $I=[i,j]$ with $|I|>1$, the length $i$ prefix of $P$ contains a consecutive sequence of $|I|$ non-wildcard characters.
\item \label{property:k-d-plus-logm} $|\{\mu_{\mathcal{I}}(0),\mu_{\mathcal{I}}(1)\dots\mu_{\mathcal{I}}(k)\}|=O(\log m)$.
\end{enumerate}
\end{lemma}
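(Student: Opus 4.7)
The plan is to refine the preliminary partition $\mathcal{J}$ of Section~\ref{sec:prelim-partition}: every wildcard stays a singleton, and each non-wildcard block $P_k$ gets chopped into power-of-$2$-length ``main'' intervals plus at most one shorter ``sub'' remainder. Let $L(i)$ denote the length of the longest maximal non-wildcard run in the prefix $p_0\dots p_{i-1}$; then $L$ is non-decreasing and $L(w_k+1)=M_k:=\max_{k'<k}|P_{k'}|$. The construction will preserve the invariant that every placed interval $I$ satisfies $|I|\le L(\text{start of }I)$, which immediately yields Property~3.

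For each block $P_k$ I scan left to right and greedily place the largest power of $2$ that fits both in the remaining portion of $P_k$ and under $L$ at the current position. Concretely, for a \emph{good} block ($|P_k|\le M_k$) this produces one main of length $2^{\lfloor\log|P_k|\rfloor}$ followed by a sub of length $|P_k|-2^{\lfloor\log|P_k|\rfloor}$, both at most $M_k$. For a \emph{bad} block ($|P_k|>M_k$), the mains form the geometric sequence $\ell_1,\ell_1,2\ell_1,4\ell_1,\dots$ starting from $\ell_1=2^{\lfloor\log M_k\rfloor}$, with possibly one repetition of the largest main so that the residual remainder fits under that largest main. The first region $P_0$ starts from $\ell_1=1$ because $M_0=0$.

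Property~1 is immediate (all refined intervals lie within single blocks), and Property~3 is the construction invariant. For Property~2 I count $d$ wildcard singletons, at most two intervals per good block, and $O(\log(|P_k|/M_k))$ intervals per bad block. Listing the bad-block indices $k_1<k_2<\cdots$, the identity $M_{k_{j+1}}=|P_{k_j}|$ holds because good blocks in between do not change the running maximum; the bad-block extras then telescope, $\sum_j\log(M_{k_{j+1}}/M_{k_j})=O(\log m)$. Adding the $O(\log|P_0|)=O(\log m)$ intervals used for $P_0$, this yields $k=O(d+\log m)$.

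The crux is Property~4, and the main obstacle in the write-up. By construction every main has a power-of-$2$ length, and every sub is strictly shorter than the largest main placed in the same block, so subs never set a new record and $\mu_{\mathcal I}$ can only take power-of-$2$ values. Moreover, whenever $\mu_{\mathcal I}$ does change, it at least doubles: inside a bad split the doubling is immediate from $\ell_j=2\ell_{j-1}$ for $j\ge 3$, and across blocks, after a bad $P_k$ one has $\mu_{\mathcal I}=2^{\lfloor\log M_k\rfloor+t-2}\in[|P_k|/4,|P_k|/2)$, so the first main of the next block either stays at or below this value (no change) or jumps straight to $2^{\lfloor\log|P_k|\rfloor}\ge 2\mu_{\mathcal I}$. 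Consequently $\mu_{\mathcal I}$ runs through a subset of $\{1,2,4,\dots,2^{\lceil\log m\rceil}\}$ and so takes $O(\log m)$ distinct values. Verifying this geometric-separation claim rigorously across all block transitions is the delicate part of the proof.
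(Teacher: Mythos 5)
Your construction is correct and delivers all four properties, but it is a genuinely different refinement than the paper's, and it is worth noting where the two diverge and why your concern at the end is actually unfounded.

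The paper's secondary partitioning seeds each block $J_h$ with two intervals whose length $\delta_h$ equals the length of the longest \emph{secondary interval} created so far (initialized to $1$), then doubles; once no more room is left it closes the block with one or two final intervals of length at most the last doubled length. You instead seed from $\ell_1=2^{\lfloor\log M_k\rfloor}$ where $M_k$ is the length of the longest \emph{preliminary block} seen so far, and close each block with a single non-power-of-two ``sub'' remainder. These quantities genuinely differ: for instance after $P_0$ of length $7$ the paper's $\delta$ is $2$ (the largest secondary interval cut from $P_0$) while your $2^{\lfloor\log M_1\rfloor}$ is $4$. Both baselines are legitimate, because what Property~\ref{property:prefixSequence} actually requires is only the existence of a sufficiently long non-wildcard run in the prefix, and $M_k$ certifies that directly (your invariant $|I|\le L(\text{start of }I)$), whereas the paper instead points to an earlier secondary interval of at least the same length or to the already-placed prefix of $J_h$ whose lengths sum to exactly $|I|$. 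Your telescoping count for Property~2 using $M_{k_{j+1}}=|P_{k_j}|$ is a nice self-contained alternative to the paper's ``each interval after the first two and before the last two doubles the running max.''

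On Property~\ref{property:k-d-plus-logm}, your closing remark that ``verifying this geometric-separation claim rigorously \dots\ is the delicate part'' is misplaced: the extra ``whenever $\mu_{\mathcal I}$ changes it at least doubles'' step is redundant. Once you have established that every main has power-of-two length and every sub is at most the largest main in its block (hence never creates a new record), $\mu_{\mathcal I}$ is a non-decreasing sequence that takes only values in $\{1,2,4,\dots\}$ and is bounded by $m$; that already gives $O(\log m)$ distinct values. This is essentially the paper's own argument for Property~4 (it proves by induction that each $\mu_{\mathcal I}(h)$ is a power of two), so your proof is complete as written and nothing ``delicate'' remains. One cosmetic slip: the sub in a block can equal (not only be strictly shorter than) the largest main when the residual exactly fills it (e.g.\ $|P_0|=3$, $\ell_1=1$, mains $1,1$, sub $1$); ``at most'' is what you use and what you need, so this does not affect the conclusion.
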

\begin{proof}
We introduce a secondary partitioning of the preliminary partitioning described in Section~\ref{sec:prelim-partition}, and prove that the secondary partitioning has all the required properties;
see
Figures~\ref{fig:PartitioningBeginning},~\ref{fig:PartitioningContinue} and~\ref{fig:ExampleOfLevels}.
Recall that we use a representation of $P$ as $P=P_0?P_1?\dots ?P_d$.
Let $J_h$ be the preliminary pattern interval corresponding to $P_h$.
The secondary partitioning is executed on the pattern intervals $\mathcal{J}=(J_0, J_1,\dots ,J_d)$, where the partitioning of $J_h$ is dependent on the partitioning of $J_0,\dots , J_{h-1}$.
Thus, for $h>0$, the secondary partitioning of $J_h$ takes place only after the secondary partitioning of $J_{h-1}$.

When partitioning pattern interval $J_h = [i,j]$, let $g_h$ be the number of pattern intervals in the secondary partitioning of $[0,i-1]$, and let  $\delta_h$ be the length of the longest pattern interval in the secondary partitioning of $[0,i-1]$. For the first pattern interval let $\delta_0=1$.
If $j\leq i+\delta_h-1$ then the only pattern interval is all of $J_h$.
If $j\leq i+2\cdot\delta_h-1$ then we create the pattern intervals $[i,i+\delta_h-1]$ and $[i+\delta_h,j]$.
Otherwise, we first create the pattern intervals $[i,i+\delta_h-1]$ and $[i+\delta_h, i+2\cdot\delta_h -1]$\footnote{The choice of having the first two intervals to be of the same length $\delta_h$ is in order to guarantee the third property in the lemma, as shown below.}, and for as long as there is enough room in the remaining preliminary pattern interval $J_h$ (between the position right after the end of the last secondary pattern interval that was just created and $j$)
we iteratively create pattern intervals where the length of each pattern interval is double the length of the previous pattern interval.
Once there is no more room left in $J_h$, let $\ell$ be the length of the last pattern interval we created. If the remaining part of the preliminary pattern interval is of length at most $\ell$, then we create one pattern interval for all the remaining preliminary pattern interval. Otherwise we create two pattern intervals, the first pattern interval of length $\ell$ and the second pattern interval using the remaining part of $J_h$.

\begin{figure}[]
\includegraphics[width=\textwidth]{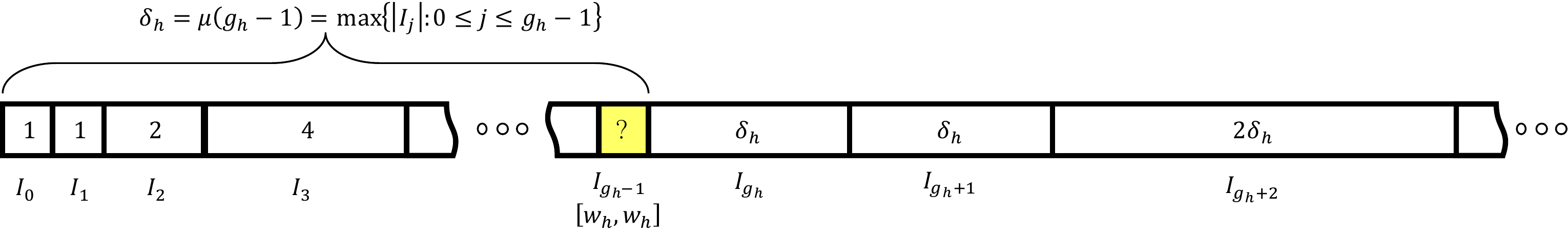}
\begin{center}
\end{center}
\caption{The general case: for each $J_h\in \mathcal{J}$ we first create two intervals of length $\delta_h$  and then
we iteratively create pattern intervals where the length of each pattern interval is double the length of the previous pattern interval.}
\label{fig:PartitioningBeginning}
\end{figure}
\begin{figure}[]
\includegraphics[width=\textwidth]{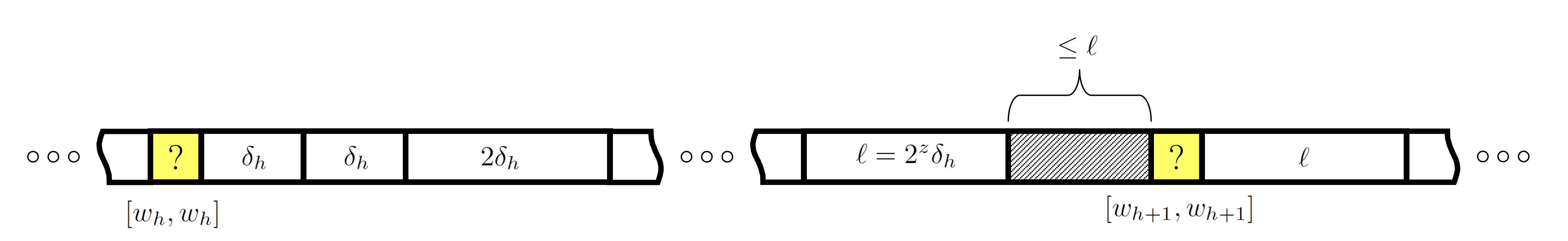}
\includegraphics[width=\textwidth]{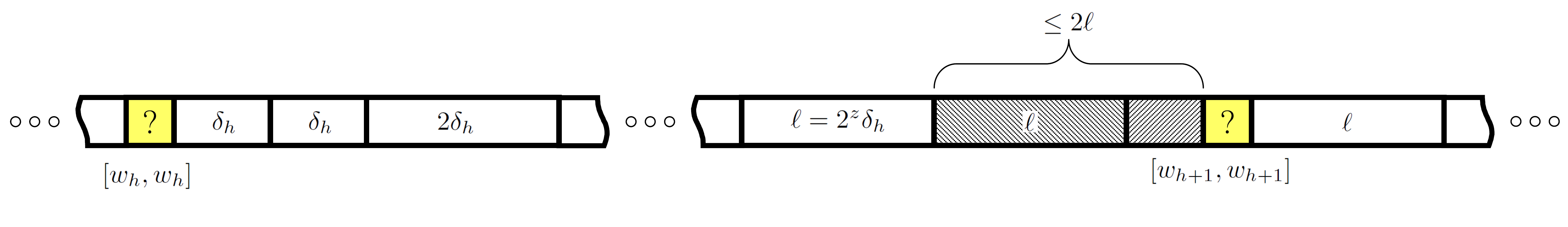}
\begin{center}
\end{center}
\caption{Once there is no more room left in $J_h$, if the remaining interval is of length at most $\ell$ (the top case), then we create one pattern interval for all the remaining interval. Otherwise (the bottom case) we create two pattern intervals, the first pattern interval of length $\ell$ and the second pattern interval using the remaining part of $J_h$.}
\label{fig:PartitioningContinue}
\end{figure}

The secondary partitioning implies all of the desired properties:

\vspace{4pt}\noindent
\textbf{Property 1.} Being that the secondary partitioning is a sub partitioning of the preliminary partitioning and the preliminary  partitioning already had this property, then the secondary partitioning has this property as well.

\vspace{4pt}\noindent
\textbf{Property 2.} For a subpattern $P_h$, the length of every pattern interval created from $J_h$ during the secondary partitioning, except for the first two pattern intervals and possibly also the last two pattern intervals, is at least twice the length of the longest pattern interval preceding it.
So the total number of such pattern intervals is $O(\log m)$.
The number of other regular pattern intervals is at most $4(d+1)$.
Additionally, there are $d$ wildcard pattern intervals.
So the total number of pattern intervals is at most $4(d+1)+d+O(\log m)= O(d+\log m)$.

\vspace{4pt}\noindent
\textbf{Property 3.} If there is a regular pattern interval $I'=[i',j']$ such that $j'<i$ and $|I'|\geq|I|$, then  the subpattern associated with $I'$ meets the requirement.

If there is no such pattern interval, it must be the case that the length of $I$ is twice the length of the pattern interval preceding $I$, and $I$ is contained in a preliminary pattern interval $J_h$ for some $h$. Let the length of the first pattern interval created in $J_h$ be denoted by $\delta_h$.
Let $I_{h,1},I_{h,2},\dots I_{h,r}$ be the first $r$ pattern intervals created in $J_h$ such that $I_{h,r}=I$. The length of any pattern interval $I_{h,r'}$ for $1<r'\leq r$ is $2^{r'-2}\delta_h$ (since $|I_{h,1}|=|I_{h,2}|=\delta_h$, and for $2<r'\leq r$ we have $|I_{h,r'}|=2|I_{h,r-1}|$), and in particular the length of $I$ is $2^{r-2}\delta_h$. Recall that $I=[i,j]$. The length of the prefix of $P_h$ up to the index $i$  is the sum of the lengths of all the pattern intervals $I_{h,r'}$ for $r'<r$. These lengths sum up to $(1+\sum_{r'=2}^{r-1}{2^{r'-2}})\delta_h=2^{r-2}\delta_h=|I|$. So the prefix of $P_h$ fulfills the requirement.

\vspace{4pt}\noindent
\textbf{Property 4.} We prove a stronger claim: for each $0\leq h \leq k$, $\mu_{\mathcal{I}}(h)$ is a power of $2$.

This is true by induction. The first pattern interval is of length $1$, and therefore $\mu_{\mathcal{I}}(0)=1=2^0$.
If $|I_h|\leq \mu_{\mathcal{I}}(h-1)$ then $\mu_{\mathcal{I}}(h)=\mu_{\mathcal{I}}(h-1)$ which is a power of $2$ by the induction hypothesis. Otherwise, if $|I_h|>\mu_{\mathcal{I}}(h-1)$ then by the secondary partitioning algorithm $|I_h|=2\mu_{\mathcal{I}}(h-1)$, and $\mu_{\mathcal{I}}(h)=2\mu_{\mathcal{I}}(h-1)$. Hence $\mu_{\mathcal{I}}(h)$ is also a power of $2$.

The largest pattern interval is at most of length $m$, and therefore there are at most $\ceil{\log m}$ different values in $\{\mu_{\mathcal{I}}(0),\mu_{\mathcal{I}}(1),\dots ,\mu_{\mathcal{I}}(k)\}$.
\end{proof}

\begin{figure}[]
\includegraphics[width=\textwidth]{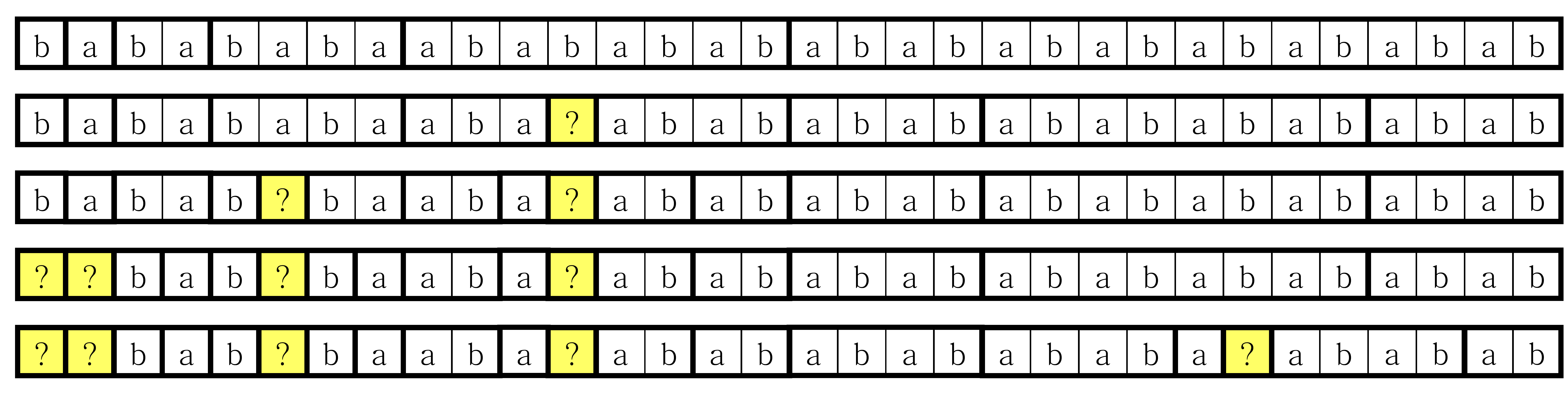}
\begin{center}
\end{center}
\caption{Example of patterns and their intervals in the secondary partitioning. Each bold rectangle corresponds to an interval in the partition.}
\label{fig:ExampleOfLevels}
\end{figure}

\section{The Candidate-fingerprint-queue}\label{sec:fingerprint-queues}
The algorithm of Theorem~\ref{thm:main} is obtained via an implementation of the candidate-queues that uses $O(d\log m)$ words of space, at the expense of having $O(d+\log m)$ intervals in the partitioning.
Such space usage implies that we do not store all candidates explicitly.
This is obtained by utilizing properties of periodicity in strings.
Since candidates are not stored explicitly, we cannot store explicit information per candidate, and in particular we cannot explicitly store fingerprints. On the other hand, we are still interested in using fingerprints in order to perform assassinations.

To tackle this, we strengthen our requirements from the candidate-queue data structure to return not just the candidate but also the fingerprint information that is needed to perform the test of whether the candidate is still valid.
For our purposes, this data structure cannot explicitly maintain all the fingerprints information.
Thus, we extend the definition of a candidate-queue to a \emph{candidate-fingerprint-queue} as follows.

\begin{definition}
A candidate-fingerprint-queue for an interval $[i,j]=I\in\mathcal I$ supports the following operations, where $t_\alpha$ is the last text character that arrived.
\begin{enumerate}

\item{} $\Enqueue(\phi(t_0 \dots t_{\alpha-i}))$: add $c=\alpha -i +1$  to the candidate-queue.
\item{} $\Dequeue()$: remove and return a candidate $c = \alpha -j $, if such a candidate exists, together with $\phi(t_0 \dots t_{c-1})$ and $\phi(t_0 \dots t_{c+i-1})$.
\end{enumerate}

\end{definition}

In order to reduce clutter of presentation, in the rest of this section we refer to the candidate-fingerprint-queue simply as the \emph{queue}.

\subsection{Implementation}
Our implementation of the queue assumes that we use a partitioning that has the properties stated in Lemma~\ref{lem:partitioningProperties}.
Let $I=[i,j]$ be a pattern interval in the partitioning and let $c$ be a candidate from $\candidates{I}{\alpha}$.
The \emph{entrance prefix} of $c$ is the substring $t_c\dots t_{c+i-1}$,  and the \emph{entrance fingerprint} is $\phi(t_c\dots t_{c+i-1})$.
By definition, since $c\in\candidates{I}{\alpha}$, the entrance  prefix of $c$ matches $p_0\dots p_{i-1}$ (which may contain wildcards).
Recall that a candidate $c$ is inserted into $Q_I$ together with $\phi(t_0\dots t_{c-1})$, which we call the \emph{candidate fingerprint} of $c$.

\paragraph{Satellite information.} The implementation associates each candidate $c$ with \emph{satellite information} (\si{}), which includes the candidate fingerprint and the entrance fingerprint of the candidate.
The \si{} of a candidate combined with the sliding property of fingerprints are crucial for the implementation of the queue.
When $c$ is added to $Q_I$, for some $I=[i,j]$, we compute the entrance fingerprint of $c$ from the candidate fingerprint and from $\phi(t_0\dots t_{c+i-1})$ which is the text fingerprint at that time.
When $c$ is removed from $Q_I$, we compute $\phi(t_0\dots t_{c+i-1})$ in constant time from the \si{} of $c$. See Figure~\ref{fig:SI-fingerprints}.

\begin{figure}[]
\begin{center}
\includegraphics[width=\textwidth]{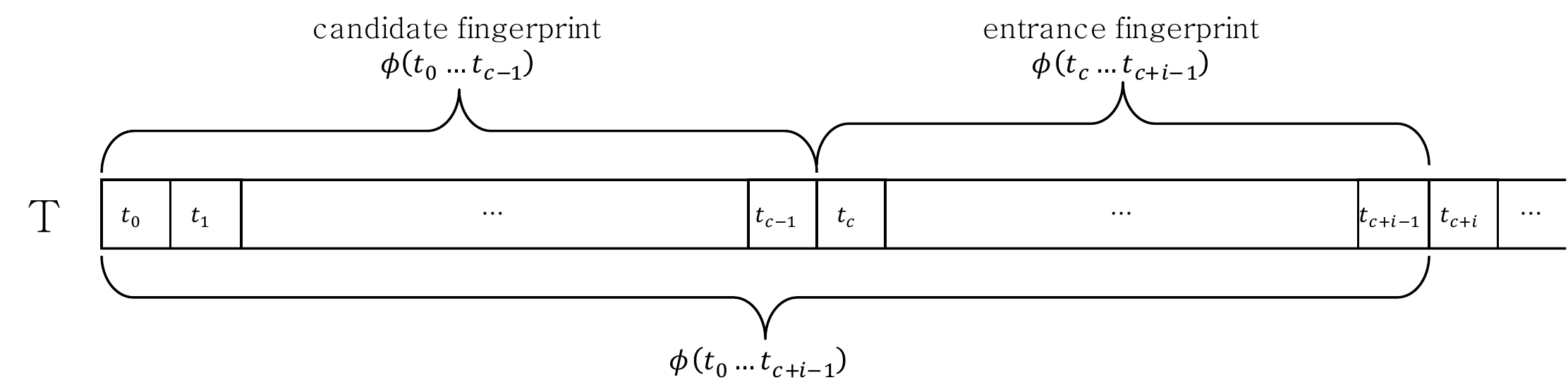}
\end{center}
\caption{The satellite information of a candidate $c$ in a text interval $\ti{I}{\alpha}$ for $I=[i,j]$ includes the candidate fingerprint $\phi(t_0\dots t_c-1)$ and the entrance fingerprint $\phi(t_c\dots t_{c+i-1})$.
The fingerprint of $\phi(t_0\dots t_{c+i-1})$ can be computed in constant time, using the sliding fingerprint property.}
\label{fig:SI-fingerprints}
\end{figure}

\paragraph{Arithmetic progressions and entrance prefixes.}

In order to implement the queue using a small amount of space, we distinguish between two types of candidates for each interval $I=[i,j]\in \mathcal I$. The first type are candidates that share a specific entrance prefix, $u_I$, which is defined solely by $p_0\dots p_{i-1}$ and is chosen such that if there are more than two candidates in $\candidates{I}{\alpha}$ with the same entrance prefix then this entrance prefix must be $u_I$ (see Lemma~\ref{lem:u_I}).
In Lemma~\ref{lem:u_Iprogression} we prove that all the candidates in $\candidates{I}{\alpha}$ that have entrance prefix $u_I$, form an arithmetic progression. This leads to Lemma~\ref{lem:uI_candidates_constant_space} where we show that all of theses candidates and their \SI{} information can be stored implicitly using $O(1)$ words of space.
The second type of candidates are the rest of the candidates, and these candidates are stored explicitly together with their \SI{} information.
We prove in Lemma~\ref{lem:boundSpace} that the total number of such candidates is $O(d\log m)$, thereby obtaining our claimed space usage.

\begin{lemma}\label{lem:u_I}
Suppose $\mathcal{I}$ is a partitioning that satisfies the properties of Lemma~\ref{lem:partitioningProperties}.
For a pattern interval $I=[i,j]\in \mathcal{I}$, there exists a string $u_I$ such that
for any text $T'$ and time $\alpha\ge 0$ the set $\candidates{I}{\alpha}$ does not contain three candidates with the same entrance prefix $u\ne u_I$.
\end{lemma}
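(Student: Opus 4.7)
The plan is to handle the easy cases first, then use Property~3 of Lemma~\ref{lem:partitioningProperties} together with Lemma~\ref{lem:threeOcc} to drive the argument in the interesting case. When $|I|\le 2$ or $I$ is a wildcard interval, the text interval $\ti{I}{\alpha}$ has length at most $2$, so $|\candidates{I}{\alpha}|\le 2$ automatically and $u_I$ may be declared arbitrarily. So assume $I=[i,j]$ is a regular interval with $|I|\ge 3$.

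By Property~3 there is an index $a$ with $a+|I|\le i$ such that $v:=p_a p_{a+1}\cdots p_{a+|I|-1}$ is a wildcard-free length-$|I|$ string depending only on $P$ and $I$. Suppose three candidates $c_1<c_2<c_3$ in $\candidates{I}{\alpha}$ share an entrance prefix $u$. Since every candidate must agree with $p_0\cdots p_{i-1}$ at the wildcard-free positions $[a,a+|v|-1]$, I would deduce $u[a\ldots a+|v|-1]=v$ and that $v$ occurs in $T$ at positions $c_1+a,\,c_2+a,\,c_3+a$, which lie inside a window of length $c_3-c_1+|v|\le 2|v|-1<2|v|$. Lemma~\ref{lem:threeOcc} then forces $v$ to be periodic and the gaps $c_2-c_1$ and $c_3-c_2$ to be positive multiples of~$\rho_v$.

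I then branch on whether $v$ is periodic, which is a property of $P$ alone. If $v$ is not periodic, the previous paragraph already yields a contradiction, so no three candidates can share any entrance prefix and $u_I$ is free to be anything. Otherwise I set $V:=p_a\cdots p_{a+\rho_v-1}$ and define $u_I$ to be the length-$i$ string given by $u_I[\ell]:=V[(\ell-a)\bmod \rho_v]$; this is the canonical $\rho_v$-periodic extension of $v$ across $[0,i-1]$, and it is uniquely determined by $P$ and $I$.

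The crux is then to verify that any entrance prefix $u$ shared by three candidates must coincide with~$u_I$. Since $T[c_1\ldots c_1+i-1]=T[c_2\ldots c_2+i-1]=u$, the string $u$ has period $c_2-c_1$, which is a multiple of $\rho_v$. For every $\ell\in[0,i-1]$ I would reach some $\ell'\in[a,a+|v|-1]$ from $\ell$ by successive shifts of $\pm(c_2-c_1)$ staying inside $[0,i-1]$, so that $u[\ell]=u[\ell']=v[\ell'-a]=V[(\ell'-a)\bmod \rho_v]=V[(\ell-a)\bmod \rho_v]=u_I[\ell]$, using that $\ell-\ell'$ is a multiple of $c_2-c_1$ and hence of~$\rho_v$. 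Since $c_2-c_1\le |I|-1=|v|-1$, every length-$|v|$ window contains at least one integer congruent to $\ell$ modulo $c_2-c_1$, and since $a+|v|\le i$ the intermediate shifts never leave $[0,i-1]$. The main obstacle I expect is precisely this interval bookkeeping for the shifting argument; everything else reduces to a direct application of Lemma~\ref{lem:threeOcc} and the structural properties guaranteed by Lemma~\ref{lem:partitioningProperties}.
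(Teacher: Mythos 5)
Your proof is correct, and while it shares the same two opening moves as the paper's (Property~3 of Lemma~\ref{lem:partitioningProperties} to obtain the wildcard-free $v$ of length $|I|$, then Lemma~\ref{lem:threeOcc} on the three shifted occurrences of $v$ at $c_1+a$, $c_2+a$, $c_3+a$ to get $v$ periodic with the gaps multiples of $\rho_v$), it then takes a genuinely different route. The paper never names $u_I$ concretely: it argues uniqueness by taking two such entrance prefixes $u$ and $u'$ (possibly from different times), derives $\rho_u\le(c_3-c_1)/2<|v|/2$, invokes Lemma~\ref{lem:substringPeriod} to conclude $\rho_u=\rho_v=\rho_{u'}$, and then translates a hypothetical mismatch index between $u$ and $u'$ by a multiple of $\rho_v$ into the occurrence of $v$, yielding a contradiction. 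You instead exhibit $u_I$ explicitly as the $\rho_v$-periodic extension of $v$ across $[0,i-1]$ and verify $u=u_I$ coordinate by coordinate, sliding each position into the window $[a,a+|v|-1]$ by steps of $c_2-c_1$ (a positive multiple of $\rho_v$ bounded by $|v|-1$) and then reducing modulo $\rho_v$. Your route avoids Lemma~\ref{lem:substringPeriod} altogether, since you never need to pin down $\rho_u$, and the interval bookkeeping you flagged as the delicate point is indeed fine: every intermediate index lies between $\ell$ and its target $\ell'$, both inside $[0,i-1]$, and $[a,a+|v|-1]$ meets every residue class modulo $c_2-c_1$ because $c_2-c_1<|v|$. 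What the paper's argument buys in exchange is the intrinsic, text-independent invariant $\rho_u=\rho_v$, which it re-derives and leans on again inside Claim~\ref{clm:x-y} in the proof of Lemma~\ref{lem:boundSpace}; your step size $c_2-c_1$ is text-dependent, which is harmless here but slightly less portable.
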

\begin{proof}

Let $c_1<c_2<c_3$ be three different candidates in $\candidates{I}{\alpha}$  with the same entrance prefix $u$.
By property 3 of Lemma \ref{lem:partitioningProperties} there is a string $v$ of length $|I| = j-i+1$ containing only non-wildcard characters that is a substring of the length $i$ prefix of $P$.

Let $r$ be an arbitrary location of $v$ in $p_0\dots p_{i-1}$ (since $v$ could appear several times in the prefix).
The three candidates imply that after a shift of $r$ characters from the candidates' locations, there are three occurrences of $v$ in the text. These occurrences are within a substring of the text of length at most $2|v|$, since all three candidates are in $\candidates{I}{\alpha}$ and so the distance between the first and last occurrence is at most $|I|-1=|v|-1$ (the 2 factor accommodates the full occurrence of the third $v$).
Thus, by Lemma~\ref{lem:threeOcc}, $v$ must be periodic, and $|v|\ge 2\rho_v$.

Since $c_1$, $c_2$, and $c_3$ are all occurrences of $u$ then $c_3-c_2$ and $c_2-c_1$ are period lengths of $u$. Thus, $\rho_u \leq \min \{c_2 -c_1,c_3- c_2\} \leq \frac{c_3-c_1}{2}\le \frac{(\alpha-i+1)-(\alpha-j+1)}{2} \leq \frac{j-i}{2} < \frac{|I|}2= \frac{|v|}2.$
Therefore, by  Lemma~\ref{lem:substringPeriod}, $\rho_u=\rho_v$.
Similarly, let $\alpha' > \alpha$ and suppose there are three candidates $c_4,c_5,c_6$ in $\candidates{I}{\alpha'}$. Notice that it is possible that $c_1,c_2$ and $c_3$ are not in $\candidates{I}{\alpha'}$ since it  is possible that enough time has passed for them to leave.  Suppose $c_4, c_5$ and $c_6$ share the same entrance prefix $u'$. Then $\rho_{u'}= \rho_v=\rho_{u}$.

Assume by contradiction that  $u'\neq u$. Notice that the only possible locations of mismatches between $u$ and $u'$ are the positions of wildcards in the $i$ length prefix of $P$, since both $u$ and $u'$ match this prefix. In particular, $v$ occurs in the $r$'th location of both $u$ and $u'$.
Let $k$ be an index of a mismatch between $u$ and $u'$. In particular, let the $k$'th character of $u$ be $x$, and the $k$'th character of $u'$ be $x'\ne x$.
Let $\gamma$ be an integer (possibly negative) such that the $k+\gamma\cdot \rho_v$ location in $u$ is within the occurrence of $v$ in $u$ (and so also within the occurrence of $v$ in $u'$). Notice that such a $\gamma$ must exist since $|v|\ge 2\rho_v$. 		
Since $\rho_{u'}= \rho_v=\rho_{u}$, the character at location  $k+\gamma\cdot \rho_v$ in $u$ must be $x$, while the character at location  $k+\gamma\cdot \rho_v$ in $u'$ must be $x'$. But $u$ and $u'$ match at all of the locations corresponding to $v$. Thus we have obtained a contradiction, and so $u=u'$ is unique, as required.
\end{proof}

\begin{lemma}\label{lem:u_Iprogression}
Suppose $\mathcal{I}$ is a partitioning that satisfies the properties by Lemma~\ref{lem:partitioningProperties}.
For a pattern interval $I=[i,j]\in\mathcal{I}$ and time $\alpha\ge 0$ if there are $h\geq 3$ candidates $c_1<c_2< \dots <c_h$ in $\candidates I \alpha$ that have $u_I$ as their entrance prefix, then the sequence $c_1,c_2,\ldots,c_h$ forms an arithmetic progression whose difference is $\rho_{u_I}$.
\end{lemma}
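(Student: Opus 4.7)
The plan is to package the statement as a direct application of Lemma~\ref{lem:BG31}, viewing the candidates $c_1,\dots,c_h$ as occurrences of $u_I$ lying inside a sufficiently short window of the text. The key preparatory observation is that $|u_I|=i$ is guaranteed to be at least as large as $|I|$, thanks to property~3 of Lemma~\ref{lem:partitioningProperties}.

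First I would argue that property~3 is available: the text interval $\ti{I}{\alpha}$ contains exactly $|I|$ positions, so the existence of $h\ge 3$ distinct candidates forces $|I|\ge 3>1$. Property~3 of Lemma~\ref{lem:partitioningProperties} then applies to $I=[i,j]$, and it asserts that the length-$i$ prefix of $P$ contains a run of $|I|$ non-wildcard characters. In particular, $|I|\le i=|u_I|$.

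Next I would bound the spread of the candidates. Since every $c_\ell$ lies in $\ti{I}{\alpha}=[\alpha-j+1,\alpha-i+1]$, we have
\[
c_h-c_1\;\le\;(\alpha-i+1)-(\alpha-j+1)\;=\;j-i\;=\;|I|-1\;<\;|u_I|.
\]
Consequently, the $h$ occurrences of $u_I$ starting at positions $c_1,\dots,c_h$ are all contained in a single text substring $w$ of length $(c_h-c_1)+|u_I|<2|u_I|$.

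Finally, I would apply Lemma~\ref{lem:BG31} with $v:=u_I$ and $u:=w$. For every $1\le\ell\le h-2$, the estimate $c_{\ell+2}-c_\ell\le c_h-c_1<|u_I|=|v|$ verifies the hypothesis of that lemma; its conclusion is exactly that $c_1,c_2,\dots,c_h$ is an arithmetic progression with common difference $\rho_{u_I}$, which is what the statement asks for. There is no real obstacle here: the only non-trivial ingredient is the use of property~3 from Lemma~\ref{lem:partitioningProperties} to certify $|u_I|\ge|I|$, which is precisely what lets the $h$ occurrences fit inside a window of length less than $2|u_I|$ and thereby makes Lemma~\ref{lem:BG31} applicable.
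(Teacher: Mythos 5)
Your overall strategy matches the paper's: reduce to Lemma~\ref{lem:BG31} after showing via Property~3 that $|u_I|=i\ge|I|$, so that all $h$ candidate positions fit inside a window shorter than $2|u_I|$. The window-length estimate and the distance bound $c_{\ell+2}-c_\ell\le c_h-c_1<|u_I|$ are both correct.

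However, there is a gap in how you invoke Lemma~\ref{lem:BG31}. That lemma requires $t_1<\dots<t_h$ to be the locations of \emph{all} occurrences of $v$ in $u$; it does not assert that an arbitrary subset of occurrences satisfying the spacing condition forms an arithmetic progression (a subset of an AP need not be an AP). You apply the lemma to the list $c_1,\dots,c_h$ and the substring $w$, but you never argue that these are in fact all of the occurrences of $u_I$ inside $w$. The missing observation is this: any occurrence of $u_I$ in $w$ begins at a position $c$ with $c_1\le c\le c_h$, hence $c\in\ti{I}{\alpha}$ (since $\ti{I}{\alpha}$ is an interval containing $c_1$ and $c_h$); and since $u_I$ is the entrance prefix of some candidate, $u_I$ matches $p_0\dots p_{i-1}$, so $c$ is itself a candidate with entrance prefix $u_I$ and therefore appears among $c_1,\dots,c_h$. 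The paper's proof makes exactly this point (``Each of these occurrences matches the $i$ length prefix of $P$, and therefore is a candidate in $\candidates{I}{\alpha}$''), first applying Lemma~\ref{lem:BG31} to all occurrences of $u_I$ that start in $\ti{I}{\alpha}$ and then identifying that set with $\candidatesUi{I}{\alpha}$. Once you add this identification your argument is complete and coincides with the paper's.
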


\begin{proof}
The distance between any two candidates in $\candidates I \alpha$ is at most $|I|$, and $|I|\leq i$ by Property~\ref{property:prefixSequence} of Lemma~\ref{lem:partitioningProperties}.
Hence, by Lemma~\ref{lem:BG31}, all of the occurrences of $u_I$ in $T$ that begin in $\ti I \alpha$
form an arithmetic progression with difference $\rho_{u_I}$.
Each of these occurrences matches the $i$ length prefix of $P$, and therefore is a candidate in $\candidates{I}{\alpha}$.
Hence, all the candidates of $\candidates{I}{\alpha}$ with $u_I$ as their entrance prefix form an arithmetic progression with difference of $\rho_{u_I}$.
\end{proof}

\paragraph{Implementation details.}
For any pattern interval $I=[i,j]$ and time $\alpha$ we split the set of candidates $\candidates{I}{\alpha}$ into two disjoint sets.
The set $\candidatesUi{I}{\alpha}=\{c\in\candidates{I}{\alpha}\,|\,t_c\dots t_{c+i-1}=u_I\}$ contains all the candidates whose entrance prefix is $u_I$, and the set $\candidatesNonUi{I}{\alpha}=\candidates{I}{\alpha}\setminus\candidatesUi{I}{\alpha}$ contains all the other candidates of $\candidates{I}{\alpha}$.
We use a linked list $\mathcal{L}_{Q_I}$ to store all of the candidates of $\candidatesNonUi{I}{\alpha}$ together with their \si{}.
Adding and removing a candidate that belongs in $\mathcal{L}_{Q_I}$ together with its \si{} is straightforward.
The candidates of $\candidatesUi{I}{\alpha}$  are maintained using a separate data structure that leverages Lemmas~\ref{lem:u_I} and~\ref{lem:u_Iprogression}.
Thus, during a $\Dequeue()$ operation, the queue verifies if the candidate to be returned is in $\mathcal{L}_{Q_I}$ or in the separate data structure for the $\candidatesUi{I}{\alpha}$ candidates. Finally, for each pattern interval $I$ the data structure stores the fingerprint of the the principle period of $u_I$.

\begin{lemma}\label{lem:uI_candidates_constant_space}
There exists an implementation of candidate-fingerprint-queues such that the queue $Q_I$  at time $\alpha>0$ maintains all the candidates of  $\candidatesUi{I}{\alpha}$ and their \si{} using $O(1)$ words of space.
\end{lemma}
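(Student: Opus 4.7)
The plan is to exploit Lemma~\ref{lem:u_Iprogression}, which guarantees that whenever $|\candidatesUi{I}{\alpha}|\ge 3$, these candidates form an arithmetic progression with common difference $\rho_{u_I}$. Since $\rho_{u_I}$ depends only on $u_I$ (and hence on the pattern, not on the text), I will encode the AP implicitly by storing only the first candidate $c_f$, the last candidate $c_\ell$, and the count $h$, using $O(1)$ words.

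To handle the satellite information without an extra factor per candidate, I would rely on two structural facts. First, every candidate in $\candidatesUi{I}{\alpha}$ has the same entrance prefix $u_I$, so all candidates share the common entrance fingerprint $\phi(u_I)$; this fingerprint is stored once per interval during preprocessing rather than copied per candidate. Second, I would store explicitly only the candidate fingerprint $\phi_{c_f}=\phi(t_0\dots t_{c_f-1})$ of the first candidate. Since $u_I$ starts at position $c_f$ in the text and has principle period $\rho_{u_I}$, the text substring $t_{c_f}\dots t_{c_f+\rho_{u_I}-1}$ coincides with the principle period of $u_I$, whose fingerprint is already stored per interval. Applying the sliding property of $\phi$ to $\phi_{c_f}$ and this fixed period fingerprint yields the candidate fingerprint of $c_f+\rho_{u_I}$, and inductively of any $c_f+k\rho_{u_I}$, recoverable on demand.

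For the operations, $\Dequeue$ returns $c_f$ together with its candidate fingerprint (the stored $\phi_{c_f}$) and entrance fingerprint (the per-interval $\phi(u_I)$), then advances $c_f\leftarrow c_f+\rho_{u_I}$, updating $\phi_{c_f}$ by one sliding step with the principle-period fingerprint and decrementing $h$. $\Enqueue$ of a new candidate $c_{\text{new}}$ sharing entrance prefix $u_I$ sets $c_\ell\leftarrow c_{\text{new}}$ and increments $h$; by Lemma~\ref{lem:u_Iprogression}, once $h$ reaches $3$ the gap $c_{\text{new}}-c_\ell$ must equal $\rho_{u_I}$, keeping the AP representation consistent.

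The main subtlety lies in the boundary case $h\le 2$, where Lemma~\ref{lem:u_Iprogression} does not force the two candidates to be $\rho_{u_I}$ apart. I would handle this by additionally maintaining the candidate fingerprint of the second candidate explicitly when $h=2$ and the difference differs from $\rho_{u_I}$, still within $O(1)$ words. Crucially, Lemma~\ref{lem:u_Iprogression} implies that no third $u_I$-candidate can be enqueued while two non-standard-spaced candidates coexist in $\candidatesUi{I}{\alpha}$ (otherwise the induced triple would contradict the AP condition), so the explicit augmentation never grows beyond constant size throughout the sequence of operations.
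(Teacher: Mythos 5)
Your construction matches the paper's implementation in all essential respects: represent the $u_I$-candidates implicitly as an arithmetic progression with difference $\rho_{u_I}$ (justified by Lemma~\ref{lem:u_Iprogression}), store the first candidate's \si{} explicitly, keep a per-interval fingerprint of the principle period of $u_I$, and recover the next candidate's \si{} by one sliding step on $\Dequeue()$. The only difference is cosmetic — you carry both endpoints and a count whereas the paper keeps the first element, $\rho_{u_I}$, and the length — and your extra discussion of the $h\le 2$ boundary (that no third $u_I$-candidate can coexist with two non-$\rho_{u_I}$-spaced ones, which indeed follows from Lemma~\ref{lem:u_Iprogression}) is a valid elaboration of what the paper handles more tersely by simply storing up to two candidates explicitly.
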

\begin{proof}
If $|\candidatesUi{I}{\alpha}|\le 2$ then $Q_I$ stores the candidates of $\candidatesUi{I}{\alpha}$ explicitly in $O(1)$ words of space.
Otherwise, by Lemma~\ref{lem:u_Iprogression}, all the candidates of $\candidatesUi{I}{\alpha}$ form an arithmetic progression.
An arithmetic progression of arbitrary length can be represented using $O(1)$ words of space.
However, $Q_I$ also needs access to the \si{} for the candidates in this progression.
To do this, $Q_I$ explicitly stores the first candidate ($\min \candidatesUi{I}{\alpha}$) together with its \si{}, the common difference of the  progression ($\rho_{u_I}$), the length of the current progression, and the fingerprint of the principle period of $u_I$.
When a new candidate $c$ with entrance fingerprint $\phi(u_I)$ enters $Q_I$, $c$ becomes the largest element in $\candidatesUi{I}{\alpha}$, and so we first increment the length of the arithmetic progression, and if $c$ is currently the only candidate in the arithmetic progression, then $Q_I$ stores $c$ and its \si{} (since then $c$ is the first candidate in the progression).
When a $\Dequeue()$ operation needs to remove the first candidate $c$ in the progression, then $Q_I$ removes $c$, which is stored explicitly together with its \si{}, decrements the length of the progression, and if there are remaining candidates in the progression then $Q_I$ computes the information for the new first remaining candidate in order to store its information explicitly.
To do this, $Q_I$ first computes the location of the new first candidate from $\rho_{u_I}$ and the location of $c$. The \si{} of the new first candidate is computed in constant time (via the sliding property) from the fingerprint of the principle period of $u_I$ and the candidate fingerprint of $c$.
\end{proof}

\paragraph{Space usage.}
The space usage of all of the queues has three components.
The first component is the lists $\mathcal{L}_{Q_I}$, which maintains the candidates of $\candidatesNonUi{I}{\alpha}$ for all the intervals $I$.
The second component is the data structures for storing the candidates with entrance prefix $u_I$ (the candidates of $\candidatesUi{I}{\alpha}$) in each $I\in \mathcal I$.
Since, by Lemma~\ref{lem:uI_candidates_constant_space}, for each $I\in\mathcal I$ all the candidates with entrance prefix $u_I$ are maintained using $O(1)$ words, all such candidates use $O(|\mathcal I|)=O(d+\log m)$ words of space.
The third component is storing for each pattern interval $I$ the fingerprint of the the principle period of $u_I$, which takes a total of $O(d+\log m)$ words of space.
In the following lemma we prove that the total space usage of all of the lists $\mathcal{L}_{Q_I}$ is $O(d\log m)$.

\begin{lemma} \label{lem:boundSpace}
$\sum_{I\in \mathcal{I} } \left|\candidatesNonUi{I}{\alpha}\right| = O(d\log m)$.
\end{lemma}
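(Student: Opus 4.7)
The plan is to start from Lemma~\ref{lem:u_I}, which shows that any entrance prefix distinct from $u_I$ is shared by at most two candidates in $\candidates{I}{\alpha}$. Writing $N_I$ for the number of distinct non-$u_I$ entrance prefixes actually appearing in $\candidates{I}{\alpha}$, this yields $|\candidatesNonUi{I}{\alpha}|\le 2 N_I$, so it suffices to prove $\sum_{I\in\mathcal{I}} N_I = O(d\log m)$. Intervals of length $1$ (in particular the $d$ wildcard intervals) trivially contribute at most $1$ each and so together contribute $O(d)$, which is absorbed by the target bound; the remainder of the plan focuses on regular intervals with $|I|\ge 2$, where property~3 of Lemma~\ref{lem:partitioningProperties} is available.

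Next, I would exploit the $O(\log m)$-level decomposition from property~4 of Lemma~\ref{lem:partitioningProperties}. Every non-$u_I$ entrance prefix $u$ of a candidate in $I=[i,j]$ must match $p$ at every non-wildcard position of $p_0\cdots p_{i-1}$ (since $u$ is an entrance prefix), and must disagree with $u_I$ at some wildcard position of $p_0\cdots p_{i-1}$ (since both $u$ and $u_I$ already satisfy the non-wildcard constraints, any point of disagreement is forced to be a wildcard of $p$). I would charge $u$ to the pair $(w,\mu_{\mathcal{I}}(I))$, where $w$ is a canonical wildcard of deviation between $u$ and $u_I$, e.g.\ the leftmost one. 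There are $O(d)$ wildcards in $P$ and $O(\log m)$ distinct levels, so if each pair receives only $O(1)$ charge, the desired bound $\sum_{I}N_I=O(d\log m)$ follows immediately.

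The main obstacle — and the combinatorial heart of the proof — is the $O(1)$ per-pair bound. The crucial lever is the rigidity of $u_I$ exposed in the proof of Lemma~\ref{lem:u_I}: property~3 supplies a length-$|I|$ non-wildcard window $v\subseteq p_0\cdots p_{i-1}$, and $u_I$ is uniquely determined as the length-$i$ string of period $\rho_v$ that agrees with $p$ at all non-wildcard positions. For two distinct entrance prefixes charged to the same pair $(w,\ell)$ — possibly arising from different intervals $I,I'$ at the same level $\ell$, whose anchor windows $v,v'$ have length at most $\ell$ and thus share periodic structure — I would propagate the disagreement at $w$ through multiples of the common period $\rho_v$ into the non-wildcard region of $p$, forcing a character of $p$ to take two distinct values and contradicting the definition of entrance prefix. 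The subtlety lies in carrying out this propagation across different intervals at the same level (rather than within one interval as in Lemma~\ref{lem:u_I}), which is why the anchor windows from property~3 and the power-of-two level structure from property~4 are both essential ingredients.
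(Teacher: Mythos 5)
Your reduction to counting distinct non-$u_I$ entrance prefixes is sound (each such prefix covers at most two candidates by Lemma~\ref{lem:u_I}), and the target cardinality of the proposed charging — wildcards times levels, $O(d\log m)$ pairs — matches the desired bound. But the crux, the $O(1)$-per-pair claim, is asserted rather than proved, and the period-propagation sketch you give for it cannot work as stated. The propagation in the proof of Lemma~\ref{lem:u_I} depends on the shared entrance prefix having period exactly $\rho_v$; for a non-$u_I$ entrance prefix $u$, precisely the opposite holds. By your own characterization, $u_I$ is the \emph{unique} length-$i$ string with period $\rho_v$ that respects the non-wildcard positions of $p_0\cdots p_{i-1}$, so any $u\ne u_I$ must \emph{violate} period $\rho_v$, and there is then no license to propagate "the disagreement at $w$" to the $v$-window via multiples of $\rho_v$. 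The remark that two intervals $I,I'$ at the same level have anchor windows that "share periodic structure" is also unproved: the anchors are merely substrings of the length-$i$ and length-$i'$ prefixes, of length at most $\ell$, and need not be periodic at all (their periodicity in the paper comes from having three candidates in the given interval, which is separate from your pair $(w,\ell)$).

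The paper's argument has a genuinely different shape, and the difference is what makes the bound go through. Rather than charging entrance prefixes to a pattern wildcard where they first deviate from $u_I$, the paper charges \emph{candidates} to a pair of distinguished \emph{text} positions. Fixing a level $\ell$, it uses one anchor $v$ coming from the leftmost interval in $\mathcal{I}_\ell$, takes the rightmost candidate $\hat c$ at that level, and extends the occurrence of $v$ near $\hat c$ to a maximal $\rho_v$-periodic text substring with endpoints $L$ and $R$ (the left/right "violations"). Claim~\ref{clm:x-y} then shows that every candidate $c\in\candidatesNonUi{I}{\alpha}$ for $I\in\mathcal{I}_\ell$ must have $L$ or $R$ inside its entrance interval: a candidate whose entrance interval lies wholly inside the periodic region would have an entrance prefix of period $\rho_v$, forcing it to equal $u_I$ by a constructed-text argument. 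Claim~\ref{clm:bound-xy-sum} then charges, for each interval, all but the most recent such candidate $c$ to the pattern position $p_{L-c}$ (resp.\ $p_{R-c}$), which must be a wildcard, and distinct candidates across all of $\mathcal{I}_\ell$ hit distinct positions because $L$ (resp.\ $R$) is shared. This yields $O(|\mathcal{I}_\ell|+d)$ per level and $O(d\log m)$ overall. To repair your proposal you would need some analogue of the violation positions; without it the $O(1)$-per-pair bound has no visible proof, and I do not see how to obtain it from periodicity of the entrance prefixes, since that periodicity is exactly what characterizes $u_I$ and hence fails for the prefixes you are trying to count.
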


\begin{proof}

By Lemma~\ref{lem:partitioningProperties}, we know that $|\{\mu(0),\dots ,\mu(k)\}|=O(\log m)$.
For each  $\ell \in \{\mu(0),\dots ,\mu(k)\}$ let $\mathcal{I}_\ell\subseteq \mathcal{I}$ be the sequence of all pattern intervals $I\in \mathcal{I}$ such that $\mu(I)=\ell$.
We show that $\sum_{I\in \mathcal{I}_\ell } |\candidatesNonUi{I}{\alpha}|  = O(|\mathcal{I}_\ell|+d)$.
Combining with property~\ref{property:k-d-plus-logm} of Lemma~\ref{lem:partitioningProperties} which states that $\sum_{\ell \in \{\mu(0),\dots ,\mu(k)\}}{|\mathcal{I}_\ell|} =|\mathcal{I}|= O(d+\log m)$ we have that:

\begin{align*}
\sum_{I\in \mathcal{I} } \left|\candidatesNonUi{I}{\alpha}\right|   &=\sum_{\ell \in \{\mu(0),\dots ,\mu(k)\}} \sum_{I\in \mathcal{I}_\ell } \left|\candidatesNonUi{I}{\alpha}\right|  \\[0.5ex]
&= \sum_{\ell \in \{\mu(0),\dots ,\mu(k)\}} O(|\mathcal{I}_\ell|+d)\\[0.5ex]
&	=\sum_{\ell \in \{\mu(0),\dots ,\mu(k)\}} O(|\mathcal{I}_\ell|) \,+\,\sum_{\ell \in \{\mu(0),\dots ,\mu(k)\}} O(d)\\[1ex]
&=O(d+\log m)+O(d\log m)\\[2ex]
&= O(d\log m)
\end{align*}

We focus on intervals for which $\left|\candidatesNonUi{I}{\alpha}\right| \geq 3$, since if $\left|\candidatesNonUi{I}{\alpha}\right|\le 2$ the bound is straightforward.

Let $[i^*,j^*]$ be the leftmost interval in $\mathcal{I}_\ell$. By definition of $\mathcal{I}_\ell$, we have $j^*-i^*+1=\ell$, and so by Property~\ref{property:prefixSequence} of Lemma~\ref{lem:partitioningProperties}, there exists a string $v$ of length $\ell$ containing only non-wildcard characters that is a substring of the length $i^*$ prefix of $P$.
Let $r$ be an arbitrary location of $v$ in $p_0\dots p_{i^*-1}$ (since $v$ could appear several times in the prefix).
For any $[i',j']=I'\in \mathcal{I}_\ell$ the entrance prefix (which does not contain wildcards) of each candidate in $\candidates{I'}{\alpha}$ matches the $i'$ prefix of $P$ (which can contain wildcards), and in particular, the location which is $r$ locations to the left of any candidate in $\candidates{I'}{\alpha}$ is a location of an occurrence of $v$ in the text\footnote{Notice that this occurrence is well defined since $i'\ge i^*\ge r+|v|$.}.

Since we focus on intervals $I\in\mathcal I_\ell$ for which $|\candidatesNonUi{I}{\alpha} | \geq 3$, then there exist three occurrences of $v$ in the text in  positions corresponding to a shift of $r$ characters from locations of $I$'s candidates.
These occurrences are within a substring of the text of length at most $2|v|$, since all three candidates are in $\candidates{I}{\alpha}$ and so the distance between the first and the last candidates is at most $|I|-1\leq \ell-1=|v|-1$. 	
Thus, by Lemma~\ref{lem:threeOcc}, $v$ must be periodic, and the distance between any two candidates in $\candidates{I}{\alpha}$ must be a multiple of $\rho_v$.

\begin{figure}[]
\begin{center}
\includegraphics[width=0.8\textwidth]{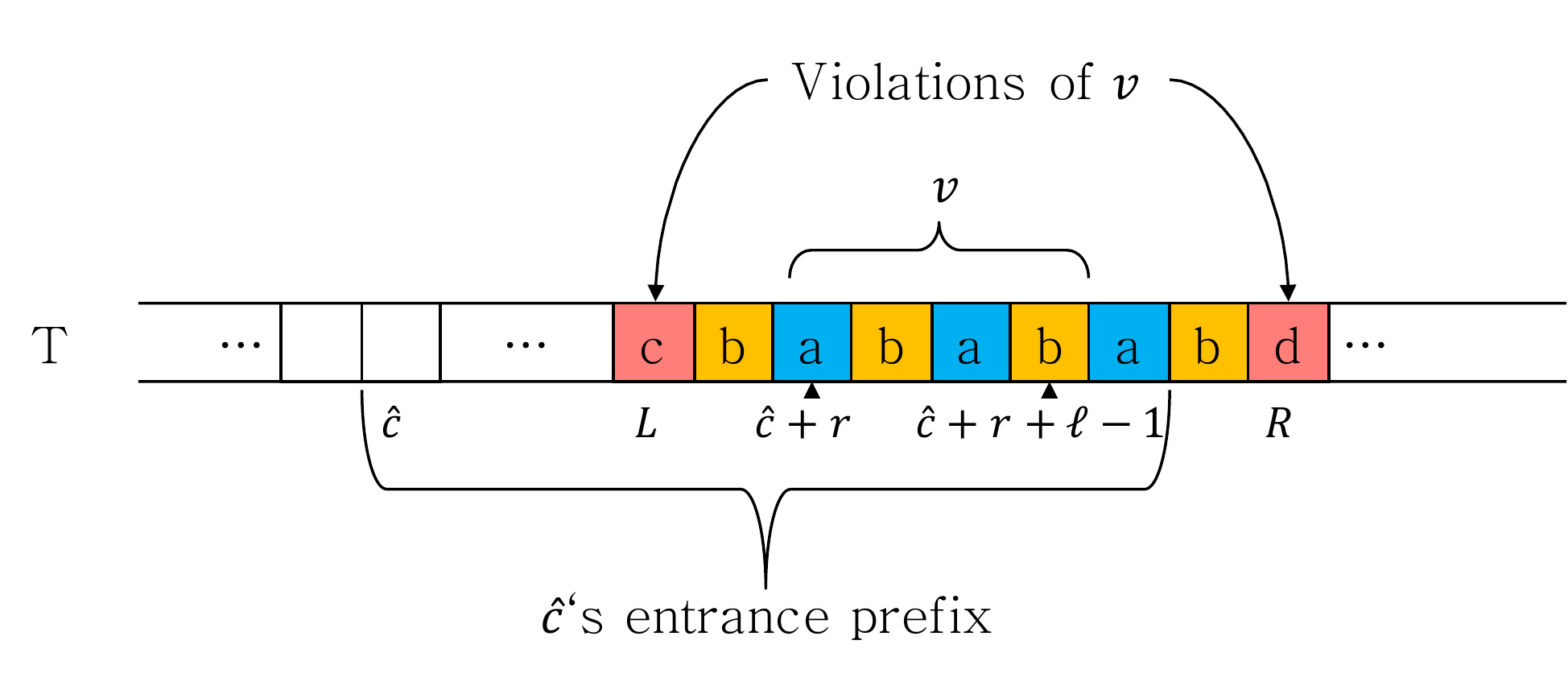}
\end{center}
\caption{Positions $L$ and $R$ are the violations of the periodic substring that contains $v$.
Notice that it is possible that $L\le \hat c$, and similarly it is possible that $R$ is in the entrance interval of $\hat c$.  }
\label{fig:left-right-violations}
\end{figure}

Let $\hat c=\max \left[\bigcup_{I\in\mathcal I_\ell}{\candidates{I}{\alpha}} \right]$ be the rightmost (largest index) candidate in the intervals corresponding to pattern intervals in $\mathcal I_\ell$.
Since $\hat c$ is a candidate in some $\candidates{I'}{\alpha}$ for $I'\in\mathcal I_\ell$, then there is an occurrence of $v$ at location $\hat c+r$. Thus,  $t_{\hat{c}+r} \dots  t_{\hat{c}+r+\ell-1}=v$.
We extend this occurrence of $v$ to the left and to the right in $T$ for as long as the length of the period does not increase.
Let the resulting substring be $t_{\lv+1}\dots t_{\rv-1}$. See Figure~\ref{fig:left-right-violations}.
If $\lv\ge 0$ then the index $\lv$ is called the \emph{left violation} of $v$.
Similarly, if $\rv\le \alpha$ then the index $\rv$ is called the \emph{right violation} of $v$.
Notice that the period of $v$ extends all the way to the beginning of the text if and only if $\lv=-1$, in which case there is no left violation.
Similarly, the period of $v$ extends all the way to the current end of the text if and only if $\rv=\alpha+1$, in which case there is no right violation.
Finally, notice that $\lv< \hat c +r\leq \hat{c}+r+\ell-1 < \rv$, since $v$ is a substring of $t_{\lv+1}\dots t_{\rv-1}$.

For a candidate $c\in\ti{[i,j]}{\alpha}$  we define the \emph{entrance interval} of $c$ to be $[c,c+i-1]$.
In addition we denote $e_c=c+i-1$, so the entrance interval of $c$ is $[c,e_c]$.

\begin{claim}\label{clm:char-overlap}
For any candidate $c\in\candidates{I}{\alpha}$  where $I\in\mathcal I_\ell$ we have $\hat c+r\in [c,e_c]$.
\end{claim}
\begin{proof}
Let $c$ be a candidate in $\candidates{I}{\alpha}$ for $I=[i,j]\in\mathcal I_\ell$.
Recall that $\candidates{I}{\alpha}\subseteq \ti{I}{\alpha}=[\alpha-j+1,\alpha-i+1]$.
Since $c$ is in this  interval  we have that $\alpha-j+1\le c\le \alpha-i+1$.
In particular, $e_c=c+i-1\ge \alpha-j+1+i-1=\alpha-(j-i+1)+1=\alpha-|I|+1$.
By definition, since $I\in\mathcal I_\ell$, we have that $|I|\le \ell$ and so $e_c\ge \alpha-\ell+1$.
Since $t_{\hat{c}+r} \dots  t_{\hat{c}+r+\ell-1}=v$, it must be that $\hat c+r+\ell-1\le \alpha$.
Thus, $\hat c+r\le \alpha-\ell+1\le e_c$.
By the maximality of $\hat c$, it is obvious that $c\le\hat c\le \hat c+r$.
Hence, we have that $c\le \hat c+r\le e_c$.
\end{proof}

\begin{claim}\label{clm:x-y}
Suppose $I=[i,j]\in\mathcal I_\ell$ and $|\candidatesNonUi{I}{\alpha}|\ge 3$. Then
for any candidate $c\in\candidatesNonUi{I}{\alpha}$ in $\normalfont\ti{I}{\alpha}$ either $\lv\in [c,e_c]$ or $\rv\in [c,e_c]$.
\end{claim}

\begin{proof}
For $c\in\candidatesNonUi{I}{\alpha}$ let $u=u_0\dots u_{i-1}$ be the entrance prefix of $c$.
Recall that $\lv< \hat c +r< \rv$.
By Claim~\ref{clm:char-overlap} it must be that $c\leq \hat c +r \leq e_c$ and so we cannot have both $\lv,\rv < c$ or both $\lv,\rv > e_c$.

Assume by contradiction that $\lv<c\leq e_c<\rv$.
We claim that there exists a text input $T'$ such that if we execute the algorithm with $T'$ as the text, then there exists some time $\beta$ where $\candidates{I}{\beta}$ contains three candidates with $u$ as their entrance prefix. Then, by Lemma~\ref{lem:u_I} we deduce that $u=u_{I}$, in contradiction to the definition of $\candidatesNonUi{I}{\alpha}$.

Recall that the principle period length of $t_{\lv+1}\dots t_{\rv-1}$ is $\rho_v$.
Since $u = t_c\dots t_{e_c}$ is a substring of $t_{\lv+1}\dots t_{\rv-1}$, it must be that $\rho_u\leq \rho_v$.
Recall that $\candidatesNonUi{I}{\alpha}$ contains at least three candidates.
Let $c_1$, $c_2$, and $c_3$ be three distinct candidates in $\candidatesNonUi{I}{\alpha}$.
Since $c_1$, $c_2$, and $c_3$ are all occurrences of $u$ then $c_3-c_2$ and $c_2-c_1$ are period lengths of $u$. Thus, $\rho_u \leq \min \{c_2 -c_1,c_3- c_2\} \leq \frac{c_3-c_1}{2} \leq \frac{j-i}{2} < \frac{|I|}2= \frac{|v|}2.$
Therefore, by  Lemma~\ref{lem:substringPeriod}, $\rho_u=\rho_v$.
Thus, $\rho_u \leq \frac{j-i}{2}$	implying that $i+2\rho_u - j\leq 0$.

Consider a long enough (at least $i+2\rho_u-1$) text $T'$ which is composed of repeated concatenation of $u_0\dots u_{\rho_u-1}$.
Notice that the substrings of $T'$ of length $i$ starting at locations $0$, $\rho_u$ and $2\rho_u$ are all exactly the string $u$, which matches $p_0\dots p_{i-1}$.
Consider an execution of the algorithm with $T'$ as the input text, and at time $\beta=i+2\rho_u-1$ consider the set $\candidates{I}{\beta}$.
We have that $\ti{I}{\beta}=[i+2\rho_u-1-j+1,i+2\rho_u-1-i+1]=[i+2\rho_u-j,2\rho_u]$.
Being that $i+2\rho_u - j\leq 0$ then the interval $[0,2\rho_u]$ is a subinterval of $\ti{I}{\beta}$, then $0$, $\rho_u$ and $2\rho_u$ are all within this interval. Thus, these locations are candidates in $\candidates{I}{\beta}$ with $u$ as their entrance prefix.
Thus, by Lemma~\ref{lem:u_I}, it must be that $u=u_I$, which contradicts $c\in\candidatesNonUi{I}{\alpha}$.
\end{proof}

Let $\candidatesNonUix{I}{\alpha}$  be the set of candidates in $\candidatesNonUi{I}{\alpha}$ whose entrance interval contains $\lv$, and let $\candidatesNonUiy$  be the set of candidates in $\candidatesNonUi{I}{\alpha}$ whose entrance interval contains $\rv$.
$\candidatesNonUix{I}{\alpha}$ and $\candidatesNonUiy$ are not necessarily disjoint.
Notice that by Claim~\ref{clm:x-y}, $\candidatesNonUix{I}{\alpha}\cup\candidatesNonUiy$ contains all the candidates of $\candidatesNonUi{I}{\alpha}$.

\begin{claim} \label{clm:bound-xy-sum}
$\sum_{I\in \mathcal{I}_\ell} \left|\candidatesNonUix{I}{\alpha}\right|=O(|\mathcal{I}_\ell| + d)$  and $\sum_{I\in \mathcal{I}_\ell} \left|\candidatesNonUiy\right|=O(|\mathcal{I}_\ell| + d)$.
\end{claim}
\begin{proof}
Let $I\in \mathcal{I}_\ell$  and let $\approx$ denote the match relation between symbols in $\Sigma \cup \{?\}$.

Notice that the contribution to $\sum_{I\in \mathcal{I}_\ell} \left|\candidatesNonUix{I}{\alpha}\right|$
from all sets $\candidatesNonUix{I}{\alpha}$ that have less than two candidates is at most $O(|\mathcal{I}_\ell|)$.
Thus, we will prove that for any set $\candidatesNonUix{I}{\alpha}$ with at least two candidates, it must be that for any candidate $c\in \candidatesNonUix{I}{\alpha}$, except for possibly one candidate, we have that $p_{\lv-c}$ is a wildcard.

Suppose $\candidatesNonUix{I}{\alpha}$ contains at least two candidates and let $c_{\textit{left}} = \max \candidatesNonUix{I}{\alpha} $ be the most recent candidate in $\candidatesNonUix{I}{\alpha}$.
Let $c<c_{\textit{left}}$ be a candidate in $\candidatesNonUix{I}{\alpha}$.
Since $c\in \candidatesNonUix{I}{\alpha}$  we have that $p_{\lv-c}\approx t_{c+\lv-c}=t_\lv$ (recall that both $\lv$ and $c$ are indices in the text).
Similarly, since $c_{\textit{left}}\in\candidatesNonUix{I}{\alpha}$  we have that $p_{\lv-c}\approx t_{c_{\textit{left}}+\lv-c}=t_{\lv+(c_{\textit{left}}-c)}$.
Recall that the distance between any two candidates in $\candidates{I}{\alpha}$ is a multiple of $\rho_v$, since $\candidates{I}{\alpha}$ contains at least $3$ candidates.
In particular the distance $(c_{\textit{left}}-c)$ is a multiple of $\rho_v$ and  $(c_{\textit{left}}-c)\leq|I|\leq|v|$.  Thus, $t_\lv \neq t_{\lv+(c_{\textit{left}}-c)}$ since $\lv$ violates the period of length $\rho_v$.
Recall that $t_\lv \approx p_{\lv-c}\approx t_{\lv+(c_{\textit{left}}-c)}$, and so $p_{\lv-c}$ must be a wildcard.
Therefore, each $c\in \candidatesNonUix{I}{\alpha}$, except for possibly $c_{\textit{left}}$, is in a position $c$ such that $p_{\lv-c}$ is a wildcard.
Since $\lv$ is the same for all of the candidates in all of the $\candidatesNonUix{I'}{\alpha}$ for all $I'\in \mathcal{I}_\ell$, then  the contribution to $\sum_{I\in \mathcal{I}_\ell} |\candidatesNonUix{I}{\alpha}|$ of the candidates that are not the most recent in their set $\candidatesNonUix{I}{\alpha}$ is at most $d$.
The contribution of the most recent candidates is at most $O(|\mathcal{I}_\ell|)$. Thus,
$\sum_{I'\in \mathcal{I}_\ell} \left|\candidatesNonUix{I'}{\alpha}\right|=O(|\mathcal{I}_\ell| + d)$.
The proof that $\sum_{I'\in \mathcal{I}_\ell} |\candidatesNonUiy|=O(|\mathcal{I}_\ell| + d)$ is symmetric.
\end{proof}

Finally, $\sum_{I\in \mathcal{I}_\ell } \left|\candidatesNonUi{I}{\alpha}\right| \leq \sum_{I\in \mathcal{I}_\ell} \left|\candidatesNonUix{I}{\alpha}\right| + \sum_{I\in \mathcal{I}_\ell} \left|\candidatesNonUiy\right|=O(|\mathcal{I}_\ell| + d)$.
Thus, we have completed the proof of Lemma~\ref{lem:boundSpace}.

\end{proof}

\section{The Algorithm of Theorem~\ref{thm:tradeoff}}\label{sec:tradeoff}
The algorithm of Theorem~\ref{thm:main} for PMDW uses $\tilde{O}(d)$ time per character and $\tilde{O}(d)$ words of space.
In this section we introduce the algorithm of Theorem~\ref{thm:tradeoff} which extends this result for a parameter $0\leq \delta \leq 1$ to an algorithm that  uses $\tilde{O}(d^{1-\delta})$ time per character and $\tilde{O}(d^{1+\delta})$ words of space.

An overview of a slightly modified version (for the sake of intuition) of the tradeoff algorithm is described as follows. Let $P^{*}$ be the longest prefix of $P$ such that $\pi_{P^*} \leq d^\delta$. The tradeoff algorithm first finds all the occurrences of $P^{*}$ in $T$ using a specialized algorithm for patterns with bounded wildcard-period length.
If $P^{*}=P$ then this completes the tradeoff algorithm.
Otherwise, let $I=[i,j]$ be the interval in the secondary partitioning of Theorem~\ref{thm:main} such that $i\leq |P^*|-1\leq j$.
We first divide $I$ into two new intervals $[i,|P^*|-1]$ and $[|P^*|,j]$. If $[|P^*|,j] =\emptyset$ then we discard $[|P^*|,j]$.
It is straightforward to see that the properties of partitions that we define in Lemma~\ref{lem:partitioningProperties} are still satisfied.
Let $I^*=[i^*=|P^*|,j^*]$ be the interval immediately following $[i,|P^*|-1]$.
Each occurrence of $P^{*}$ in the text is inserted into the algorithm of Theorem~\ref{thm:main} as a candidate directly into $Q_{I^*}$.
Thus, the entrance prefixes of candidates in the queues match prefixes of $P$ that are longer than $P^{*}$ and, by maximality of $P^{*}$, these prefixes of $P$ have large wildcard-period length.
This implies that the average distance between two consecutive candidates that are occurrences of $P^*$ is at least $d^\delta$, and so, combined with a carefully designed scheduling approach for verifying candidates, we are able to obtain an $\tilde{O}(d^{1-\delta})$ amortized time cost per character.

\paragraph{Overview.}
In Section~\ref{sec:smallPiAlgo} we describe the specialized algorithm for dealing with patterns whose wildcard-period length is at most $\tau$, for some parameter $\tau>1$.
In Section~\ref{sec:proof-tradeoff} we complete the proof of Theorem~\ref{thm:tradeoff} by describing the missing details for the tradeoff algorithm. In particular, the proof of Theorem~\ref{thm:tradeoff} uses the  algorithm of Section~\ref{sec:smallPiAlgo} with $\tau=d^\delta$.

\subsection{Patterns with Small Wildcard-period Length} \label{sec:smallPiAlgo}

\begin{figure}
\centering
\subfigure[The matrix $M^q$]
{
\includegraphics[width=0.4\textwidth]{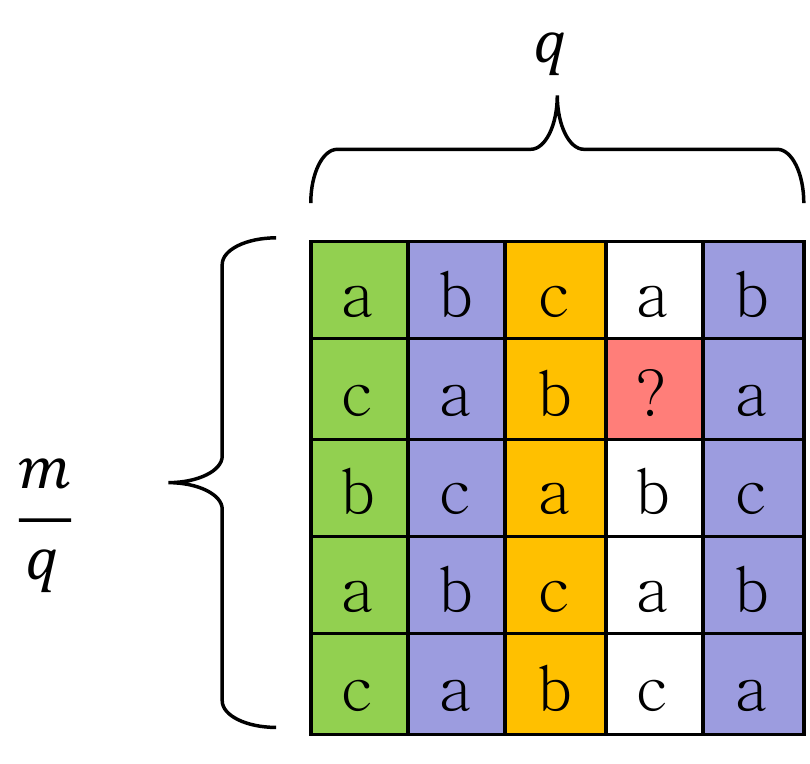}
\label{fig:Mq_matrix}
}\hfill
\subfigure[The offset patterns and $\Gamma_q$]
{
\includegraphics[width=0.4\textwidth]{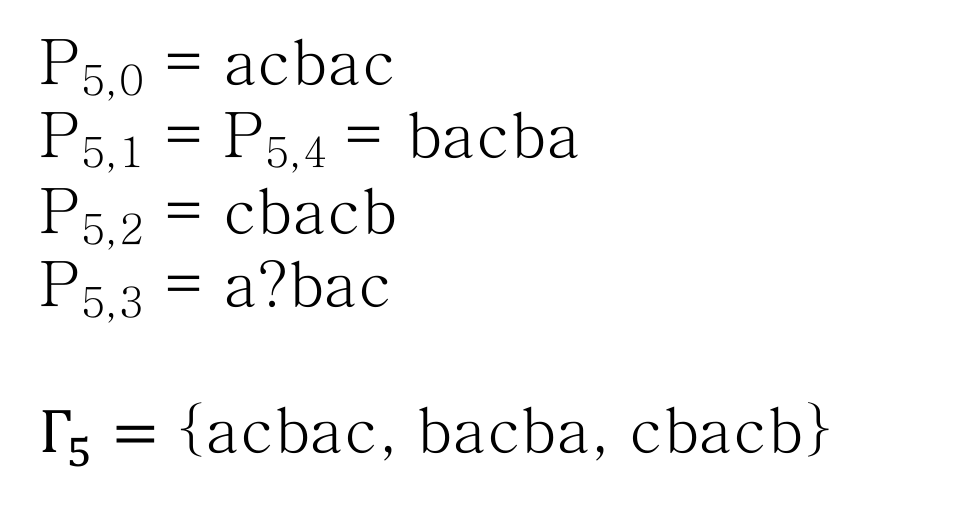}
\label{fig:Gammaq}
}
\subfigure[The column pattern $P_q$]
{
\includegraphics[width=0.2\textwidth]{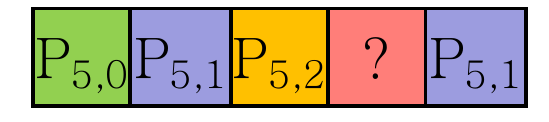}
\label{fig:Pq}
}
\caption{Example of the matrix representation for pattern $P=abcab?abcabcabcabcabc$ and $q=5$. Each color represents a unique offset pattern.
The offset patterns $P_{5,1}$ and $P_{5,4}$ are equal and therefore they have the same id (column color).
Since $P_{5,3}$ contains a wildcard, it is not associated with any id.}
\label{fig:Mq}
\end{figure}

Let $P$ be a pattern of length $m$ with $d$ wildcards  such that  $\pi_P<\tau$.
Let $q$ be an integer, which for simplicity is assumed to divide $m$ (see Appendix~\ref{app:divisors} where we discuss how to get rid of this assumption).
Consider the conceptual matrix $M^q=\{m^q_{x,y}\}$ of size $\frac{m} {q} \times q$ where $m^q_{x,y} = p_{(x-1)\cdot q + y-1}$. An example is given in Figure~\ref{fig:Mq}.
For any integer $0\leq r< q$ the $r$'th column of $M^q$ corresponds to an \emph{offset pattern} $P_{q,r}=p_r p_{r+q} p_{r+2q}\dots p_{m-q+r}$.
Notice that some offset patterns might be equal.
Let $\Gamma_q = \{P_{q,r}\,|\,0\leq r<q \text{ and } '?'\notin P_{q,r}\}$ be the set of all the offset patterns that do not contain any wildcards.
Each offset pattern in  $\Gamma_q$ is given a unique id.
The set of unique ids is denoted by $\textit{ID}_q$.
We say that index $i$ in $P$ is \emph{covered} by $q$ if the column containing $p_i$ does not contain a wildcard, and so $P_{q,i\modulo q}\in \Gamma_q$.
The columns of $M^q$ define a \emph{column pattern} $P_q$ of length $q$, where the $j$'th character is the id of the $P_{q,j}$ column, or $'?'$ if $P_{q,j}\notin \Gamma_q$ (since $P_{q,j}$ contains wildcards).

We partition $T$ into $q$ \emph{offset texts}, where for every $0\leq r< q$ we define $T_{q,r}=t_r t_{r+q} t_{r+2q}\dots$.
Using the dictionary matching streaming (DMS) algorithm of Clifford et al.~\cite{CFPSS15B} we look for occurrences of offset patterns from $\Gamma_q$ in each of the offset texts.
We emphasize that we do \emph{not} only find occurrences of $P_{q,r}$ in $T_{q,r}$, since we cannot guarantee that the offset of $T$ synchronizes with an occurrence of $P$.
When the character $t_\alpha$ arrives, the algorithm passes $t_\alpha$ to the DMS algorithm for $T_{q,\alpha\modulo q}$.
We also create a streaming \emph{column text} $T_q$ whose characters correspond to the ids of offset patterns as follows.
If one of the offset patterns is found in $T_{q,\alpha\modulo q}$, then its id is the $\alpha$'th character in $T_q$. Otherwise, we use a dummy character for the $\alpha$'th character in $T_q$.

\paragraph{Full cover.}
Notice that an occurrence of $P$ in $T$ necessarily creates an occurrence of $P_q$ in $T_q$.
Such occurrences are found via the black box algorithm of Clifford et al.~\cite{CEPP11}.
However, an occurrence of $P_q$ in $T_q$ does not necessarily mean there was an occurrence of $P$ in $T$, since some characters in $P$ are not covered by $q$.
In order to avoid such false positives we run the process in parallel with several choices of $q$, while guaranteeing that each non wildcard character in $P$ is covered by at least one of those choices. Thus, if there is an occurrence of $P_q$ at location $i$ in $T_q$ for all the choices of $q$, then it must be that $P$ appears in $T$ at location $i$.  The choices of $q$ are given by the following lemma.

\begin{lemma}\label{lem:Qprimes}
There exists a set $Q$ of $O(\log d)$ prime numbers
such that any index of a non-wildcard character in $P$ is covered by at least one prime number $q\in Q$, and each number in $Q$ is at most $\tilde O(d)$.
\end{lemma}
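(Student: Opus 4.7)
The plan is to construct $Q$ by a greedy/probabilistic selection from a carefully chosen range of primes. Fix an interval $[A, 2A]$ with $A = \Theta(d \operatorname{polylog} m)$, so every prime in it is $\tilde{O}(d)$, and let $\mathcal{P}$ denote the set of primes in $[A, 2A]$; by the prime number theorem, $|\mathcal{P}| = \Theta(A/\log A)$.

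The technical heart of the proof is a per-index covering-density bound. Fix any non-wildcard index $i \in [0, m-1]$. A prime $q \in \mathcal{P}$ fails to cover $i$ precisely when $q \mid (w-i)$ for some wildcard $w \in W$. For each such $w$, the integer $w-i$ is nonzero (since $i \notin W$) with magnitude at most $m$, so the number of its prime divisors in $[A, 2A]$ is at most $\lfloor \log m / \log A \rfloor$, because each such divisor contributes at least $\log A$ to $\log|w-i|$. Summing over the $d$ wildcards, the number of primes in $\mathcal{P}$ that are bad for $i$ is at most $d \log m / \log A$; relative to $|\mathcal{P}|$, this is a fraction of $O(d \log m / A)$, which can be pushed below any desired threshold $\varepsilon$ by enlarging the constant in $A$ (while keeping $A$ within $\tilde{O}(d)$).

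With this density bound, I would finish by a greedy selection: start with all non-wildcard indices uncovered, and repeatedly add to $Q$ the prime in $\mathcal{P}$ covering the most of them. Averaging the density bound guarantees that at each step the uncovered set shrinks by a factor of $1 - \varepsilon$, so the process terminates after $O(\log m / \log(1/\varepsilon))$ iterations; taking $\varepsilon = 1/d$ yields $O(\log m / \log d) = O(\log_d m)$ primes. To close the gap to the $O(\log d)$ count claimed by the lemma, I would invoke a Chinese remainder theorem observation: once the chosen primes $q_1, \ldots, q_k$ satisfy $\prod q_j \ge m$, every non-wildcard index simultaneously uncovered by all of them is uniquely determined modulo $\prod q_j$ by a tuple $(w_1, \ldots, w_k) \in W^k$ recording, for each $q_j$, which wildcard $i$ is congruent to modulo $q_j$; hence at most $d^k$ indices remain uncovered after $k$ primes, and $O(\log d)$ additional greedy steps mop up this polynomial-in-$d$ residual.

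The main obstacle I anticipate is precisely this passage from the clean $O(\log m/\log d)$ count to the claimed $O(\log d)$ bound: the greedy/probabilistic template alone only delivers the former, and matching the exact $O(\log d)$ target requires a delicate balancing of the prime range $A$ so that the CRT product condition $\prod q_j \ge m$ can be achieved in few primes while each prime remains $\tilde{O}(d)$ and the per-step shrinkage factor stays large enough to absorb the residual uncovered set of size $d^{O(\log d)}$ within $O(\log d)$ rounds of greedy cleanup.
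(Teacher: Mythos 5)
Your per-index density bound is essentially the same one the paper uses: for a non-wildcard index $i$ and wildcard $w$, the number of primes in $[A,2A]$ dividing $w-i$ is at most $\log m/\log A$ (each contributes $\ge\log A$ bits to $\log|w-i|<\log m$), so summing over $d$ wildcards and dividing by $|\mathcal{P}|=\Theta(A/\log A)$ gives bad-prime density $\Theta(d\log m/A)$. The calibration step, however, fails: with $A=\Theta(d\cdot\polylog m)$ this density is $\Theta(\log m/\polylog m)$, which you can make $1/\polylog m$ but not $1/d$ --- reaching $\varepsilon=1/d$ forces $A=\Omega(d^2\log m)$, breaking the $\tilde{O}(d)$ size constraint. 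So the greedy selection delivers only $O(\log m)$ primes, not $O(\log_d m)$, and your CRT mop-up cannot close the remaining gap to $O(\log d)$: getting $\prod q_j\ge m$ with primes of size $\tilde{O}(d)$ already forces $k=\Omega(\log_d m)$, at which point the $d^k$ bound on uncovered indices can exceed $m$ and is vacuous, and iterating still yields $O(\log_d m)$ overall.

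The missing idea is a case split, which is how the paper reconciles $O(\log m)$ with $O(\log d)$. If $2d>m/\log^2 m$, then $m=O(d\log^2 m)$ and hence $\log m=O(\log d)$, so an $O(\log m)$-sized set of primes is already $O(\log d)$; in this regime there are $\Theta(d\log m)$ primes bounded by $2d\log^2 m=\tilde{O}(d)$, each non-wildcard index has at most $d\log m$ bad primes (density $\le 1/2$), and $2\log m$ independently chosen random primes cover everything with positive probability by a union bound. If instead $2d\le m/\log^2 m$, the paper takes $Q$ to be a single prime $q\ge m$, for which no two distinct indices below $m$ coincide modulo $q$, so every non-wildcard index is trivially covered. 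Once this dichotomy is in place, your greedy argument with a \emph{constant} density target (and the probabilistic argument) are interchangeable and the rest of your plan goes through; the $\varepsilon=1/d$ ambition and the CRT cleanup were both unnecessary detours.
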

\begin{proof}
The proof uses the probabilistic method: we show that the probability that the set $Q$ exists is strictly larger than 0. Since our proof is constructive it provides a randomized construction of $Q$.

It is well known that for a prime number $q$, every integer $0\leq z < q$ defines a congruence class which contains all integers $i$ such that $i \modulo q = z$.
For any two distinct natural numbers $x,y\in\mathbb N$, let $D_{x,y}$ be the set of prime numbers $q$ such that $x$ and $y$ are in the same congruence class modulo $q$ (i.e. $x\modulo q=y \modulo q$).
Notice that in the interpretation of the pattern columns in the conceptual matrix, if $q\in D_{x,y}$
then $p_x$ and $p_y$ are in the same column of the conceptual matrix $M^q$.
Recall that $W$ is the set of occurrences of wildcards in $P$.
Thus, if $0\le j<m$ is an index such that $j\notin W$ and if $w\in W$ such that $q\in D_{j,w}$, then $j$ is surely not covered by $q$.
By the Chinese remainder theorem, $|D_{j,w}|<\log m$ (otherwise for $\gamma=\prod_{q\in D_{j,w}}q>\prod_{q\in D_{j,w}}2\geq m$, and so $j\modulo \gamma =w \modulo \gamma$ implying that $j=w$).

For any $0\le j<m$  such that $j\notin W$, let $D_j=\bigcup_{w\in W}D_{j,w}$, so $|D_j|\leq \sum_{w\in W}|D_{j,w}|< |W|\log m=d\log m$.
If $2d\le \frac m {\log^2 m}$ then the proof is trivialized by choosing $Q$ to contain only the smallest prime number which is at least $m$.
If $2d>\frac m {\log^2 m}$, by Corollary 1 in~\cite{RSL62},  then there are at least $2d\log  m$ prime numbers whose value  are upper bounded by $2d\log^2 m$.
Let $\hat{Q}$ be the set of those prime numbers.
For a random $q\in \hat{Q}$, the probability that a specific non-wildcard pattern index $j$ is not covered by $q$ is at most $\frac{|D_j|}{|\hat{Q}|}\leq\frac{d\log m}{2d\log m}=\frac{1}{2}$.
Let $Q$ be a set of  $2\log m$ randomly chosen prime numbers from $\hat{Q}$.
The probability that a specific non-wildcard pattern index $j$ is not covered by any of the prime numbers in $Q$ is less than $\frac{1}{2^{2\log m}}\leq \frac{1}{m^2}$. Thus, the probability that there exists a non-wildcard pattern index $j$  which is not covered by any of the prime numbers in $Q$ is less than $\frac{m-d}{m^2}\leq \frac{1}{m}$. Therefore, there must exist a set $Q$ that covers all of the indices of non-wildcard characters from $P$.
\end{proof}

From a space usage perspective, we need the size of $|\Gamma_q|$ to be small, since this directly affects the space usage of the DMS algorithm which uses $\tilde{O}(k)$ space, where $k$ is the number of patterns in the dictionary. In our case $k = |\Gamma_q|$. In order to bound the size of $\Gamma_q$ we use the following lemma.

\begin{lemma}\label{lem:GammaBound}
If $\pi_P \leq \tau$ then for any $q\in \mathbb N$ we have $|\Gamma_q|=O(\tau)$.
\end{lemma}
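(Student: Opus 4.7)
The plan is a double-counting argument that exploits the witness text guaranteed by the hypothesis. By Definition~\ref{def:wildcard-period}, there exists a string $S'$ of length $2m-1$ in which $P$ occurs at least $h := \lceil m/\tau\rceil$ times at positions $c_1 < c_2 < \dots < c_h$, each in $[0,m-1]$. For every residue $\rho \in \{0,\dots,q-1\}$ I consider the \emph{$S'$-column} $S'[\rho] := S'_{\rho}S'_{\rho+q}S'_{\rho+2q}\dots$, obtained by sub-sampling $S'$ with step $q$. Since the $q$ $S'$-columns partition the characters of $S'$, their total length is $2m-1$, so the total number of valid starting positions for a length-$(m/q)$ substring summed across all $S'$-columns is at most $(2m-1) - q(m/q-1) = m+q-1$.

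The next step is to establish the following embedding. For every occurrence $c_i$ of $P$ in $S'$ and every column index $r$ with $P_{q,r}\in\Gamma_q$, the string $P_{q,r}$ appears as a substring of $S'[(c_i+r)\modulo q]$ at starting position $\lfloor (c_i+r)/q\rfloor$. This is immediate: each character $p_{r+kq}$ of $P_{q,r}$ is non-wildcard, so the occurrence at $c_i$ forces $S'_{c_i + r + kq} = p_{r+kq}$ for $k=0,\dots,m/q-1$, and these $S'$-positions all share residue $(c_i+r)\modulo q$ and sit at consecutive offsets starting from $\lfloor (c_i+r)/q\rfloor$.

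To finish, for each distinct pattern $\pi\in\Gamma_q$ I pick an arbitrary representative column index $r_\pi$ with $P_{q,r_\pi}=\pi$. The map $i \mapsto \bigl((c_i+r_\pi)\modulo q,\,\lfloor(c_i+r_\pi)/q\rfloor\bigr)$ is injective, because its image reconstructs $c_i+r_\pi$ and hence $c_i$; thus $\pi$ claims $h$ distinct (column, position) slots among the $S'$-columns. Moreover, two distinct patterns $\pi\neq\pi'$ cannot share a slot, since the substring of $S'$ at any given slot is uniquely determined. Summing over $\pi\in\Gamma_q$ gives $h\cdot |\Gamma_q|$ distinct slots, all drawn from a pool of size at most $m+q-1$, hence $|\Gamma_q| \le (m+q-1)/h \le \tau(1+q/m) \le 2\tau = O(\tau)$, using the standing assumption $q\mid m$ (so $q\le m$). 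The only nuance to be careful about is that $S'$-columns too short to contain any length-$(m/q)$ substring contribute nothing to the budget; they simply shrink the right-hand side and the inequality still holds.
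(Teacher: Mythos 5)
Your proof is correct, and it is the same double-counting idea as the paper's: pick one representative column index per id, observe that each occurrence of $P$ in the witness string pins down $|\Gamma_q|$ distinct objects, show all these objects across all occurrences are pairwise distinct, and conclude $|\Gamma_q|\cdot(\text{number of occurrences}) = O(m)$. The only difference is bookkeeping: the paper counts positions in $S$ that each occurrence ``eliminates'' from being another occurrence (together with the occurrence itself), and proves disjointness by contradiction; you instead count $(\text{column},\text{position})$ slots in the subsampled columns and prove injectivity of the map $(i,\pi)\mapsto\text{slot}$ directly. The two are in natural bijection (the eliminated location $i+r-r_1$ in $S$ corresponds exactly to the slot where $P_{q,r}$ occurs shifted by $r_1$), so the arguments are equivalent; yours is arguably a bit cleaner since it avoids the contradiction step and needs no distinguished minimal representative $r_1$. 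One small caveat: your closing remark that too-short columns ``shrink the right-hand side'' has the inequality direction backwards --- discarding a too-short column would make the true slot count \emph{larger} than $(2m-1)-q(m/q-1)$, not smaller --- but since $q\mid m$ and $q\le m$ every column has length at least $m/q$, so the issue never arises and your count $m+q-1$ is exact.
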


\begin{proof}
Since $\pi_P \leq \tau$, there exists a string $S=s_0 \dots s_{2m-2}$ with no wildcards that contains $\Omega(\frac{m}{\tau})$ occurrences of $P$. Using the string $S$ we show that $|\Gamma_q|=O(\tau)$.

For each id in $\textit{ID}_q$ we pick an index of a representative column in $M_q$ that has this id, and denote this set by $R_q$.
Let $r_1$ be the minimum index in $R_q$.
For every index $0\leq i<m$ let $S_i=s_i\dots s_{i+m-1}$ (see Figure~\ref{fig:si-example}).
For every $0\leq r<q$ let $S_{i,q,r}=s_{i+r}s_{i+r+q}\dots s_{i+m-q+r}$, and so for any integer $0\leq \Delta< q-r$ we have $S_{i,q,r+\Delta}=S_{i+\Delta,q,r}$.
Notice that if $S_i$ matches $P$ then $P_{q,r}=S_{i,q,r}$ for each $r\in R_q$.

\begin{figure}
\centering
\includegraphics[width=\textwidth]{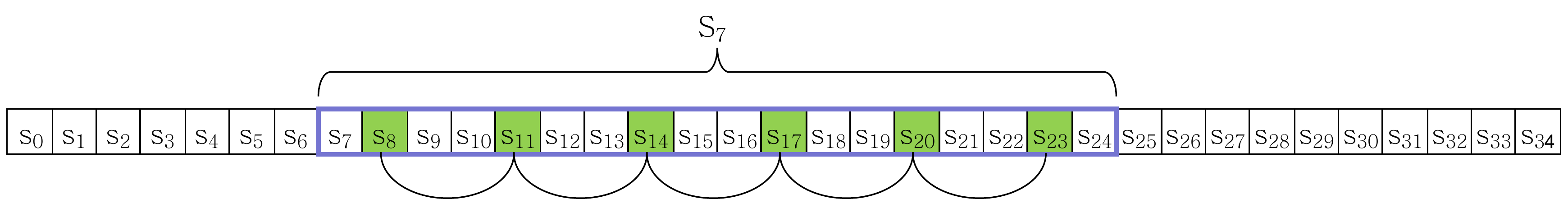}
\caption{For string $S=s_0\dots s_{34}$  for pattern of length $m=18$, $S_7$ is marked by the blue rectangle and the green indices are the characters of $S_{7,3,1}$. Notice that $S_{7,3,1}=S_{8,3,0}$.}

\label{fig:si-example}
\end{figure}

Let $i$ be an index of an occurrence of $P$ in $S$.
For any distinct $r,r'\in R_q$, it must be that $S_{i,q,r} = P_{q,r} \neq P_{q,r'} = S_{i,q,r'}$.
In particular, for any $r\in R_q$ such that $r > r_1$, we have $P_{q,r_1}=S_{i,q,r_1}\neq S_{i,q,r}=S_{i+r-r_1,q,r_1}$.
This implies that $i+r-r_1$ cannot be an occurrence of $P$.
Hence, every occurrence of $P$ in $S$ eliminates $|R_q|-1$ locations in $S$ from being an occurrence of $P$.
We now show that the sets of eliminated locations defined by distinct occurrences are disjoint.
Assume without loss of generality that $S$ contains at least two occurrences.
Let  $i_1$ and $i_2$ be two distinct occurrences of $P$ in $S$, and assume by contradiction that an index $j$ is eliminated by both of these occurrences.
Since $s_{i_1}\dots s_{i_1+m-1}$ matches $P$, we have that $S_{i_1,q,j-i_1}=P_{j-i_1}$ and $j-i_1\in R_q$.
Similarly, we have that $S_{i_2,q,j-i_2}=P_{j-i_2}$ and $j-i_2\in R_q$.
Being that $S_{i_1,q,j-i_1}=S_{i_2,q,j-i_2}$ we have that $P_{j-i_2}=P_{j-i_1}$, contradicting the definition of $R_q$.

Therefore, the maximum number of occurrences of $P$ in $S$ is at most $\frac{|S|}{|R_q|}=\frac{2m-1}{|R_q|}$.
Since $S$ contains at least $\frac{m}{\tau}$  instances of $P$, it must be that $\frac{m}{\tau}\leq \frac{2m-1}{|R_q|}$ which implies that $|\Gamma_q|=|R_q|\leq 2\tau=O(\tau)$.
\end{proof}

\paragraph{Complexities.}
For a single $q\in Q$, the algorithm creates $q = \tilde{O}(d)$ offset patterns and texts.
For each such offset text the algorithm applies an instance of the DMS algorithm with a dictionary of $O(\tau)$ strings (by Lemma~\ref{lem:GammaBound}).
Since each instance of the DMS algorithm uses $\tilde{O}(\tau)$ words of space~\cite{CFPSS15B}, the total space usage for all instances of the DMS algorithm is $\tilde{O}(d\tau)$ words. Moreover, the time per character in each DMS algorithm is $\tilde{O}(1)$ time, and each time a character appears we inject it into only one of the DMS algorithms (for this specific $q$).
In addition, the algorithm uses an instance of the black box algorithm for $T_q$, with a pattern of length $q$. This uses another $O(q)= \tilde{O}(d)$ space and another $\tilde{O}(1)$ time per character~\cite{CEPP11}. Thus the total space usage due to one element in $Q$ is $\tilde{O}(d\tau)$ words. Since $|Q|=O(\log d)$ the total space usage for all elements in $Q$ is $\tilde{O}(d\tau)$ words, and the total time per arriving character is $\tilde{O}(1)$. Thus we have proven the following.

\begin{theorem} \label{thm:smallWPAlgorithm}

For any $\tau\ge 1$, there exists a randomized Monte Carlo algorithm for PMDW  on  patterns $P$ with $\pi_P<\tau$ in the streaming model,	which succeeds with probability $1-1/poly(n)$, uses $\tilde O(d\tau)$ words of space and spends $\tilde O(1)$ time per arriving text character.
\end{theorem}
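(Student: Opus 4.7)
The plan is to assemble the algorithm from three ingredients that are already in place: the set $Q$ of covering primes from Lemma~\ref{lem:Qprimes}, the matrix/column construction $M^q,\,\Gamma_q,\,P_q,\,T_q$ described just before the theorem, and two black boxes from prior work (the dictionary matching streaming (DMS) algorithm of Clifford et al.~\cite{CFPSS15B} and the online black box of Clifford et al.~\cite{CEPP11}). First I would fix $Q$ via Lemma~\ref{lem:Qprimes}, and for each $q\in Q$ precompute $\Gamma_q$, the id assignment, and the column pattern $P_q$ during the preprocessing phase (which is unrestricted in space). The online phase runs, for every $q\in Q$ and every offset $r\in\{0,\dots,q-1\}$, a DMS instance whose dictionary is $\Gamma_q$ and whose stream is the offset text $T_{q,r}$; an arriving character $t_\alpha$ is fed into exactly one such instance, namely the one for $r=\alpha\bmod q$. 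When that instance reports an occurrence of some $P_{q,r}\in\Gamma_q$, we emit its id as the $\alpha$-th symbol of the column stream $T_q$; otherwise we emit a dummy symbol. In parallel, an instance of the online black-box algorithm of~\cite{CEPP11} runs on $T_q$ with pattern $P_q$ (of length $q$). Finally, the algorithm declares a match at text position $i$ iff, for \emph{every} $q\in Q$, the black-box instance on $T_q$ reports an occurrence of $P_q$ at position $i$.

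For correctness I would argue in both directions. An occurrence of $P$ at position $i$ in $T$ trivially forces, for every $q\in Q$ and every $r$, the offset pattern $P_{q,r}\in\Gamma_q$ to occur in $T_{q,r}$ at the appropriate aligned position, hence $P_q$ occurs at position $i$ in $T_q$ for each $q$. Conversely, suppose $P_q$ occurs at position $i$ in $T_q$ for every $q\in Q$; then for every non-wildcard index $j$ of $P$, Lemma~\ref{lem:Qprimes} guarantees some $q\in Q$ with $j$ covered by $q$, so the corresponding column carries a real id and the matching of $P_q$ in $T_q$ at position $i$ certifies $t_{i+j}=p_j$. Hence $P$ occurs at position $i$ in $T$. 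The probability of error is a union bound over the $O(n)$ reports and the $\tilde O(d)$ black-box/DMS instances, which can be absorbed into the success guarantees of~\cite{CFPSS15B,CEPP11} by choosing their internal parameters to fail with probability $1/\text{poly}(n)$.

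For the complexity bookkeeping, the crucial input is Lemma~\ref{lem:GammaBound}, which gives $|\Gamma_q|=O(\tau)$ for every $q$ because $\pi_P\le\tau$. Thus each DMS instance runs on a dictionary of $O(\tau)$ strings and uses $\tilde O(\tau)$ words of space and $\tilde O(1)$ time per character. For a single $q\in Q$ we maintain $q=\tilde O(d)$ such instances, giving $\tilde O(d\tau)$ space per $q$; the black-box on $T_q$ contributes an additional $O(q)=\tilde O(d)$ space and $\tilde O(1)$ time per character. Summing over the $|Q|=O(\log d)$ primes yields $\tilde O(d\tau)$ total space and $\tilde O(1)$ time per arriving character, because each character is routed to exactly one DMS instance per $q$ and thus incurs only $O(|Q|)=\tilde O(1)$ DMS updates plus $|Q|$ column-black-box updates.

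The one point that needs a little care, and which I would flag as the main obstacle to a fully clean write-up, is the assumption that $q$ divides $m$, which the exposition defers to Appendix~\ref{app:divisors}. For the argument above this only matters in the definition of the matrix $M^q$ and the column pattern $P_q$; I would handle it by padding $P$ to a length that is a multiple of $q$ with wildcard symbols on the right (which do not affect occurrences and only change $|\Gamma_q|$ and $|P_q|$ by $O(1)$ factors), so the asymptotic space and time bounds are preserved. All other steps are mechanical once Lemmas~\ref{lem:Qprimes} and~\ref{lem:GammaBound} are in hand.
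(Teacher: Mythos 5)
Your overall construction, correctness argument, and complexity accounting are the same as the paper's: fix $Q$ by Lemma~\ref{lem:Qprimes}, run a DMS instance per offset text with dictionary $\Gamma_q$, build $T_q$ from the reported ids, run the online black box of~\cite{CEPP11} on $T_q$ against $P_q$, declare a match exactly when all $q\in Q$ agree, and bound the space by $|\Gamma_q|=O(\tau)$ via Lemma~\ref{lem:GammaBound}. That part is on target.

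The one place you diverge---handling $q\nmid m$---is where I think your fix is actually wrong, not merely in need of polish. You propose padding $P$ on the right with wildcards to length $m'=q\lceil m/q\rceil$. But those padding wildcards occupy the residues $(m\bmod q),\dots,q-1$ modulo $q$, i.e.\ exactly the columns that in the unpadded matrix had length $\lfloor m/q\rfloor$. After padding, each of those columns contains a wildcard in its last row and hence drops out of $\Gamma_q$, so \emph{every} non-wildcard index $j$ with $j\bmod q\ge m\bmod q$ becomes uncovered by $q$, regardless of whether it was covered before. Lemma~\ref{lem:Qprimes} is proved for the original wildcard set $W$ and says nothing about this $q$-dependent enlargement of $W$; since the extra ``wildcards'' differ across $q\in Q$, you cannot simply re-run the probabilistic argument with a single fixed enlarged $W$ either. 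So after padding, coverage can fail for some real non-wildcard index $j$, which re-opens the door to false positives---the very thing the full-cover step exists to prevent. Your remark that padding ``only change[s] $|\Gamma_q|$ and $|P_q|$ by $O(1)$ factors'' is true but beside the point: it is the coverage guarantee that breaks, not the size bounds. The paper's Appendix~\ref{app:divisors} sidesteps this by running two instances for each $q$, one treating the length-$\lceil m/q\rceil$ columns as live and the length-$\lfloor m/q\rfloor$ columns as $'?'$, and one the other way around; then every index that Lemma~\ref{lem:Qprimes} covers for $q$ is genuinely covered in exactly one of the two instances, and the space/time only doubles. You should replace the padding step with that two-instance scheme (or an equivalent fix that preserves coverage).
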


\subsection{Proof of Theorem~\ref{thm:tradeoff}}\label{sec:proof-tradeoff}
In this section we combine the algorithm of Theorem~\ref{thm:main} with the algorithm of Theorem~\ref{thm:smallWPAlgorithm} and introduce an algorithm for patterns with general wildcard-period length, thereby proving Theorem~\ref{thm:tradeoff}.

Prior to Section~\ref{sec:smallPiAlgo} we presented an almost accurate description of the algorithm. The only two parts of the description that require elaboration are regarding how to insert occurrences of $P^*$ into the appropriate candidate-fingerprint-queue efficiently, and how to schedule validations of candidates so that the amortized cost is low. We first focus on how to insert candidates and later we discuss the scheduling.

\paragraph{Direct insertion of candidates.} The challenge with inserting occurrences of $P^*$ into $Q_{I^*}$ is that the candidate-fingerprint-queue data structure uses the \SI{} of candidates, and so the straightforward ways for providing this information together with the new candidates (which are occurrences of $P^*$) cost either too much time or too much space.
In order to meet our desired complexities, we first investigate the purposes of different parts of \SI{}.

The \SI{} for a candidate $c$ in $\candidates{I=[i,j]}{\alpha}$ consists of the candidate fingerprint, $\phi(t_0\dots t_{c-1})$, and the entrance fingerprint, $\phi(t_c\dots t_{c+i-1})$. The \SI{} has two purposes. The first is to validate a candidate after a $\Dequeue()$ operation, in which case the algorithm makes use of both parts of the \SI{} in order to compute $\phi(t_{c+i}\dots t_{c+j})$ by combining the \SI{} with the text fingerprint. The second purpose is to compute the next entrance fingerprints of candidates in order to
distinguish between candidates that are stored as part of an arithmetic progression and candidates that are not. The entrance fingerprint is obtained, via the sliding property, from the candidate fingerprint in the \SI{} and the current text fingerprint.

Notice that in order to validate $c$ the algorithm only needs the fingerprint of $\phi(t_0\dots t_{c-i+1})$.
Also notice that entrance prefixes are only used for candidates that are at some point part of a stored arithmetic progression. Thus, for a specially chosen subset of strings $\Psi\subseteq \Sigma^{|P^*|}$ we precompute all of the fingerprints of strings in $\Psi$. The set $\Psi$ is chosen so that for any occurrence of $P^*$ that is injected as a candidate $c$ where $c$ is at some point part of a stored arithmetic progression, the occurrence of $P^*$ at location $c$ is in $\Psi$. We use the DMS algorithm~\cite{CFPSS15B} to locate strings from $\Psi$ in the text, and whenever such a string appears, we compute the \SI{} for the corresponding candidate in constant time from the stored fingerprint and the current text fingerprint. We emphasize that not all of the candidates that correspond to strings in $\Psi$ need to necessarily at some point be a part of an arithmetic progression. However, in order to reduce the space usage, we require that $\Psi$ is not too large, and in particular $|\Psi| = O(d+\log m)$. For a candidate $c$ that does not correspond to a string in $\Psi$, instead of maintain the \SI{} of $c$, we explicitly maintain the fingerprint of $\phi(t_0\dots t_{c-i+1})$ where $c\in \candidates{I=[i,j]}{\alpha}$. Notice that whenever such a candidate enters a new text interval, the text fingerprint at that time is exactly the information which we need to store.

\paragraph{Creating $\Psi$.}
Consider all pattern intervals $I=[i,j]\in \mathcal{I}$ with $i\geq i^*$.
Notice that there are at most $O(d + \log m)$ such pattern intervals.
For each such interval $I$, let $\psi_I$ be the prefix of $u_I$ of length $|P^*|$.
Since, by Lemma~\ref{lem:u_I}, a candidate $c \in \candidatesUi{I}{\alpha}$ implies an occurrence of $u_I$ at location $c$, then $\psi_I$ also appears at location $c$. Thus, we define $\Psi$ to be the set containing $\psi_I$ for all such pattern intervals $I$. Since any candidate in an arithmetic progression at time $\alpha$ must be in $\candidatesUi{I}{\alpha}$ for some interval $I$, it is guaranteed that when $c$ corresponded to an occurrence of $P^*$, that occurrence must have been $\psi_I$, and so $\Psi$ has the required properties.

\paragraph{Scheduling validations.}
Since the only bound we have proven on the number of pattern intervals $I=[i,j]\in \mathcal{I}$ with $i\geq i^*$ is $O(d+\log m)$, if each time a new text character arrives we perform a $\Dequeue()$ operation for each one of the pattern intervals, then the time cost can be as large as $O(d+\log m)$ which is too much. The solution for reducing this time cost is to only perform a $\Dequeue()$ operation on $Q_I$ when a candidate $c$ actually leaves $\ti{I}{\alpha}$ and needs to be validated. This is implemented by maintaining a priority queue on top of the pattern intervals, where the keys that are used are the next time a candidate exits the corresponding text interval. Each time a candidate leaves a text interval, the key for the queue of that interval is updated to the time the next candidate leaves (if such a candidate exists). When a candidate entering a text interval is the only candidate of that text interval, then the key for the queue of this text interval is also updated.

\paragraph{Complexities.}
Recall that $I^*=[i^*,j^*]$ is a pattern interval such that $i^*=|P^*|$, and that each time the algorithm finds an occurrence of $P^*$, the corresponding candidate is inserted into $Q_{I^*}$. Let $P'$ be the prefix of $P$ of length $j^*+1$. By maximality of $P^*$, it must be that $\pi_{P'}>d^\delta$.
We partition the time usage of the algorithm into three parts. The first is the amount of time spent on finding occurrences of $P^*$ using the algorithm of Theorem~\ref{thm:smallWPAlgorithm}, which is $\tilde O(1)$. The second is the amount of time spent performing $\Enqueue()$ and $\Dequeue()$ operations on $Q_{I^*}$, which is also $\tilde O(1)$ since we perform $O(1)$ operations on this queue per each arriving character. The third is the amount of time spent on $\Enqueue()$ and $\Dequeue()$ operations on $Q_I$ for $I=[i,j]$ with $i>j^*$. These operations only apply to candidates that are occurrences of $P'$. For this part we use amortized analysis.

By definition of wildcard-period length, for any string $S$ of size $2|P'|-1$, we have $d^\delta < \pi_{P'} \le \ceil{\frac{|P'|}{occ(S,P')}}$. Being that $occ(S,P') \le |P'|$, we have $d^\delta < \frac{2|P'|}{occ(S,P')}$. Notice that for a text $T$ of size $n\ge |P'|$, we must have $occ(T,P')<\frac{2n}{d^\delta}$. This is because otherwise, if $n \ge 2|P'|-1$ then there exists a substring of $n$ of length $2|P'|-1$ with at least $\frac{2|P'|}{d^\delta}$ occurrences of $P'$, and if $n < 2|P'|-1$ then we can pad $T$ to create such a string. In both cases we contradict $d^\delta < \frac{2|P'|}{occ(S,P')}$ for any string $S$ of length $2|P'|-1$.

The total amount of time spent on each occurrence of $P'$ is $\tilde O(d)$, and so the total cost for processing $T$ on candidates that are also occurrences of $P'$ is at most $\tilde O(occ(T,P')\cdot d) = \tilde O(\frac{2n}{d^\delta}d)=O(n\cdot d^{1-\delta})$. Thus, the amortized cost per character is $\tilde O(d^{1-\delta})$.

For the space complexity, the most expensive part is the use of the algorithm of Theorem~\ref{thm:smallWPAlgorithm} which takes $O(d\cdot d^\delta)=O(d^{1+\delta})$ words of space. This completes the proof of Theorem~\ref{thm:tradeoff}.

\appendix

\bibliography{StreamingWildcards}

\section{Missing Details}\label{sec:tradeoffAppendix}
\subsection{Dealing with \texorpdfstring{$q\nmid m$}{ q nmid m}}\label{app:divisors}
If $q\nmid m$, then the strings in $\Gamma_q$ have two possible lengths; either $\floor{\frac{m}{q}}$ or $\ceil{\frac{m}{q}}$. This implies that one string in $\Gamma_q$ could be a proper suffix of another string in $\Gamma_q$. So if the longer one appears in an offset text, then both ids need to be given to $T_q$ - a situation in which it is not clear what to do.
So to avoid such scenarios, for each $q\in Q$ we run the algorithm twice, in parallel, where one instance uses the DMS algorithm for one length while the other instance uses the DMS algorithm on the other length.
This creates two instances of $P_q$ and $T_q$, one for each length of columns under consideration. Notice that in order for the algorithm to work, when considering one specific length, all of the columns that correspond to the other length are treated as a $'?'$ in the appropriate instance of $P_q$.

\end{document}